\documentclass[11pt,twoside,letter]{article}
\usepackage{amsthm, amsmath, amssymb, amsfonts}
\usepackage{mathtools}
\usepackage{booktabs}

\usepackage{bm}
\usepackage{bbm}
\usepackage{mathrsfs}
\usepackage{cancel}

\usepackage{graphicx}

\usepackage{caption}
\usepackage[font=sl,labelfont=bf]{caption}
\usepackage{subcaption}

\usepackage{epstopdf}
\usepackage[shortlabels]{enumitem}

\usepackage[usenames,dvipsnames]{color}

\definecolor{labelkey}{rgb}{0,0,1}

\usepackage[colorlinks=true, pdfstartview=FitV, linkcolor=blue, citecolor=blue, urlcolor=blue]{hyperref}
\usepackage[usenames]{color}

\usepackage{url}
\makeatletter\def\url@leostyle{%
 \@ifundefined{selectfont}{\def\UrlFont{\sf}}{\def\UrlFont{\scriptsize\ttfamily}}} \makeatother

\usepackage[margin=1.0in, letterpaper]{geometry}

\newtheorem{theorem}{Theorem}
\newtheorem{proposition}[theorem]{Proposition}

\theoremstyle{definition}
\newtheorem{definition}[theorem]{Definition}

\theoremstyle{remark}
\newtheorem{remark}[theorem]{Remark}

\numberwithin{equation}{section}
\numberwithin{theorem}{section}

\definecolor{Red}{rgb}{0.9,0,0.0}
\definecolor{Blue}{rgb}{0,0.0,1.0}

\def\cA{\mathcal{A}}
\def\cB{\mathcal{B}}
\def\cC{\mathcal{C}}

\def\cI{\mathcal{I}}

\def\cL{\mathcal{L}}

\def\cS{\mathcal{S}}

\def\bP{\mathbb{P}}

\def\bR{\mathbb{R}}

\def\sB{\mathscr{B}}

\def\sF{\mathscr{F}}

\def\sS{\mathscr{S}}

\def\bfB{\mathbf{B}}
\def\bfC{\mathbf{C}}

\def\bfS{\mathbf{S}}

\def\bfW{\mathbf{W}}

\def\bfpi{\boldsymbol{\pi}}

\newcommand{\1}{\mathbbm{1}}            
\newcommand{\set}[1]{\{#1\}}            

\DeclareMathOperator*{\argmin}{arg\,min} 
\DeclareMathOperator*{\argmax}{arg\,max} 
\DeclareMathOperator{\Var}{\mathrm{Var}}          

\title{Pro-rata mechanisms in groundwater markets}

\author{ 
	Igor Cialenco\,\thanks{Department of Applied Mathematics, Illinois Institute of Technology
		\newline \hspace*{1.45em}  10 W 32nd Str, Building RE, Room 220, Chicago, IL 60616, USA
		\newline \hspace*{1.45em}  Email: \url{cialenco@iit.edu},  URL: \url{http://cialenco.com}
        \newline \hspace*{1.45em} ORCID: \url{https://orcid.org/0000-0002-1825-0097}
		\vspace{0.5em}} 
	\and   
	 Michael Ludkovski\,\thanks{Department of Statistics and Applied Probability, University of California Santa Barbara
		\newline \hspace*{1.45em}  South Hall, Santa Barbara, CA 93106-3110, USA
		\newline \hspace*{1.45em} Email: \url{ludkovski@pstat.ucsb.edu}, URL: \url{http://ludkovski.pstat.ucsb.edu/}, 
        \newline \hspace*{1.45em} ORCID: \url{https://orcid.org/0000-0001-7887-3870}
		\vspace{0.5em}}
\and
 Gael Dimitri Tekam Fongouo\,\thanks{Department of Applied Mathematics, Illinois Institute of Technology
		\newline \hspace*{1.45em}  10 W 32nd Str, Building RE, Room 220, Chicago, IL 60616, USA
		\newline \hspace*{1.45em}  Email: \url{gtekamfongouo@hawk.illinoistech.edu}
		\vspace{0.5em}} 
}

\date{ {\small  First Circulated and This Version: August 1, 2026}}

\begin{document}

	\maketitle

	\vspace{-2em}

	{\footnotesize
		\begin{tabular}{l@{} p{350pt}}
			\hline \\[-.2em]
			\textsc{Abstract}: \ & 
We introduce a pro-rata rationing mechanism for resolving supply-demand imbalances in groundwater markets, extending the price-formation model of Cialenco and Ludkovski (2025).  We show that under the pro-rata distribution, every price is a Nash equilibrium, thereby pro-rata approach provides a rationing device whenever supply and demand fail to match. By the very nature of the pro-rata mechanism, the resolution of supply-demand imbalances is unique, and the proportional rationing approach is fair. First, we consider markets with exogenous restrictions on the amounts each agent may buy and/or sell, deriving closed-form first-best consumption and characterizing how one-sided caps monotonically shift the Pareto price, while two-sided caps have an ambiguous effect. These results give the market-maker (or regulator) a tool for studying the impact of trading restrictions on price formation. Second, we study a leader-follower setting in which a regulator (the leader) sets the trading price by optimizing first its own objective, such as balancing social welfare against a target traded volume or a fairness objective such as Gini-type disparity measure across farmers' profitability, while farmers (the `followers') respond via pro-rata trading.  We further compare the proposed pro-rata approach to a family of asymmetric rationing schemes (seniority-based, excess-based, uniform, and mixed pro-rata rules) that trade off proportional fairness against protections for small or senior water-rights holders. Throughout, we illustrate the theoretical results with a numerical case study calibrated to a stylized four-farmer groundwater market. 
\\[0.5em]
\textsc{Keywords:} \ &  pro-rata rationing, groundwater price, Nash equilibrium, water rights, Pareto optimality, Gini index.   \\[0.5em]
			\textsc{MSC2020:} \ &  91B76, 91A10, 91B72 \\[0.5em]
			\textsc{JEL:} \ &  C72, Q25,  D47, Q58, D63   \\[1em]
            \hline
\end{tabular}
	}

\section{Introduction}
Groundwater---fresh water pumped from underground aquifers---is a critical source  in regions where precipitation and surface water resources are limited. For example, in California's Central Valley which supplies over 30\% of fruits and vegetables in US, groundwater accounts for 50-65\% of irrigation, especially during the rainless summer months. Historically, groundwater was extracted through small wells and viewed as part of the landowners' prerogative. Large-scale farming and accompanying industrial-scale pumping has led to a severe case of the tragedy of the commons: without a regulatory mechanism, the ``free'' groundwater resource is over-extracted by individual users. The impacts of the unsustainable status quo are manifold: dry wells, land subsidence, saltwater contamination, permanent aquifer compression, surface stream collapse, etc. 

To bring order to the ``free-for-all'' runaway pumping, management mechanisms have been adopted, aiming  to track and regulate groundwater usage. Existing groundwater management practices \cite{jakemanEtAl2016Book} are rooted in an allocation paradigm. First, pumping volumes are mandated to be measured \cite{maples2018leveraging} and strictly reconciled via water rights accounting. Second, regulators shape the overall pumping cap that modulates legal groundwater rights. Finally, regulators set rules governing pumping rights exchange and carryover, to supplement the legal governance laws. Extensive literature supports the resulting market approach as the most effective for sustainable management \cite{ayres2021environmental,AyresEtAl2021,HanakEtAl2023}.

To stabilize the aquifers, pumping rights are rationed, so that aggregate pumping is capped, often substantially below business-as-usual levels. To mitigate the resulting losses on farmers, market mechanisms permit groundwater trading, allowing  less-efficient (in terms of their individual shadow price of water) agents will sell their rights to the more-efficient agents that extract higher utility from water. The corresponding groundwater price acts as the signal that separates stakeholders into potential buyers and sellers.  In a fully market-driven set up, the groundwater price is also used as a clearing mechanism, leading to endogenous price formation as studied in the model of \cite{CialencoLudkovski2025}.
Revisiting that setup, we show that matching supply and demand is equivalent to the first-best central planner allocation (aka the Pareto equilibrium), and moreover maximizes the traded volume.

However, as shown in that reference, \textit{any} price can be associated to a Generalized Nash Equilibrium, and hence equating supply and demand, aka relying on the Pareto price $p^\circ$, is by no means the only plausible outcome. Indeed, there are multiple reasons why supply-demand imbalance might result. First, regulators often impose constraints on trading, for example capping how much water can be traded per agent, or targeting a particular traded volume. Second, regulators often directly intervene, meaning that the price is determined by additional factors beyond supply-demand matching. As two examples, regulators seek to prevent irreversible aquifer compression due to land subsidence, and to ensure environmental protection of local communities  that are  legally subordinated to the pumping rights of farmers \cite{ayres2021environmental}.  Third, stakeholders are often non treated equally, i.e., there is priority or preferential treatment rules which impact price formation. Due to all these reasons, it is common that groundwater trading occurs but supply and demand do not match. For more information about groundwater markets, we refer the interested reader to the dedicated resource\footnote{\url{https://cialenco.github.io/AquaStoch/}}. 

As soon as we have a market where $p \neq p^\circ$, the question of rationing re-appears. For example, suppose that regulators set a high groundwater price; then there would be more sellers than buyers and a scheme is needed to allocate the few rights that the ``most motivated'' buyers wish to purchase across all these sellers. The primary goal of this work is to study the mechanism of pro-rata which is the standard rule for partial matching of supply and demand and 
 has been used extensively in environmental rationing (timber and fishing quotas, pollution permits) and in financial markets (electronic trading of bonds). Proportional rationing underlying pro-rata generates complex effects in terms of gains and losses for the different stakeholders,  essential for understanding behavior of non-Pareto markets.

Beyond foundational analysis of pro-rata allocations, we investigate two setups that require it. First, we study \textit{constrained markets}, where trading volumes are restricted a priori. Hence, the unconstrained Pareto outcome is simply not feasible. Our key Theorems \ref{th:market-SB}-\ref{th:market-SB-Pareto-change} reveal how constraints on how many rights can be bought/sold impact the price and the consumption of each farmer. Second, we consider \textit{leader-follower markets} where the market-maker is an active regulator who explicitly sets groundwater prices. This is in contrast to  the Pareto setup, where the  state authority  acts like an ``invisible hand'', effectively optimizing the aggregate profits of the market participants, without criteria of their own. Instead, we view the regulator as optimizing $\sup_p [{\cal R}(p)-{\cal C}(p)]$, where ${\cal R}(p)$ is the ``revenue'' associated with groundwater price $p$, for example a weighted sum of farmers' gains and the value of environmental protection, and ${\cal C}(p)$ is the ``cost'', for instance the forgone trading gains relative  to the Pareto first-best market. Another motivation is a regulator that wishes to account for various collective externalities   and balancing the interests of all stakeholders, not just those with the legal rights \cite{bruno2024designing,kuwayama2013regulation}.  The interaction between the rationing rule and the market-maker preferences creates nontrivial effects that to the best of our knowledge have not been studied elsewhere. 

A market  where the price process $\{p(t)\}$ is first selected by the market-maker, and then farmers optimize their actions is an example of a leader-follower game \cite{insley2019climate}. 
The latter aspect links our work to literature on leader-follower oligopolies \cite{laffont1993theory}, including dominant firm with a competitive fringe and the work of Weitzman \cite{weitzman1974prices} on market regulation. Prior economic works on coupling this to rationing is \cite{weitzman1977price} and the recent \cite{ryan2022rationing} that specifically discusses rationing of groundwater.

In the broader class of \textit{principal-agent} games \cite{CvitanicZhang2012} there is a similar hierarchy where the regulator comes up with a market design, while strategically anticipating how stakeholders would react to such rules. In the second stage, pumpers interact among themselves, e.g., by trading their water rights.

The rest of the paper is organized as follows. Section \ref{sec:model} sets up our model and discusses properties of the Pareto price $p^\circ$. Section \ref{sec:prorata} defines the pro-rata rationing scheme and illustrates its implications for groundwater trading. Section \ref{sec:constrains} covers the case of constrained trading, whereby agents have caps on amounts of water traded. Section \ref{sec:Centralized-price} outlines objectives by the regulator; finally Section \ref{sec:conclude} concludes.

\section{Market model and problem formulation}\label{sec:model}
Our setup targets rural basins where the primary stakeholders are farmers and agricultural water districts, and the primary use of groundwater is for irrigation, i.e.~to grow agricultural products. Below we abstract from such details and use the generic economic terminology of agents and profit-loss functions.  Following the groundwater market model introduced in \cite{CialencoLudkovski2025}.

On a probability space $(\Omega, \sF, \bP)$, assume that $J$ economic agents trade within one basin groundwater rights among themselves. We use boldface to denote vectors, e.g.~$\boldsymbol{\varphi}=(\varphi_1,\ldots,\varphi_J)$. Fraktur letters  denote  sums of quantities across agents, e.g. $\mathfrak{W}=\sum_{j=1}^JW_j$.  

Each agent $j$  produces goods according to a profit-and-loss function $G_j:\Omega\times[0,\infty)\to\bR$, which maps \textit{water use} $C_j\geq 0$, in acre-feet (ac-ft) to profit (in dollars) $G_j(C_j)$. We assume that  the function $C_j \mapsto G_j(C_j)$ is increasing,  strictly concave and differentiable on $\mathcal C_j$. In particular, higher water use leads to higher profit. 

Agents also can trade water among themselves.
Groundwater trading is interpreted as individual groundwater users transferring their allocations, with no physical conveyance of groundwater between them. These trades are ``annual leases'' and only apply for the given period. 
Denote by $\psi_j$ the amount of \textit{water traded}, in ac-ft,  by agent $j$, with  convention that  $\psi_j>0$ means selling water, and $\psi_j<0$ buying water.  Let $p$ be the price  of traded water (in \$/ac-ft). Both $\psi_j, p$ could be random. Then, the $j$-th agent P\&L is given by 
\begin{align}\label{eq:L}
L_j(C,\psi, p) := G_j(C_j) + \psi_j \cdot p. 
\end{align}
The traded water amounts must balance out via the \textit{market clearing condition} 
\begin{equation}\label{eq:mrktClearPsi}
\sum_{j=1}^J \psi_j = 0. 
\end{equation}
The market-clearing condition is binding and imposed as a hard constraint, though it could be relaxed by allowing participants to opt out of trading.

Let $W_j$ be the total amount of water available to the agent $j$.
We impose the water budget constraints
\begin{equation}
    -\sum_{i\neq j} W_i  \leq C_j + \psi_j \leq  W_j. \label{eq:waterBudgetConstr}
\end{equation}

We also assume that $C_j \in \cC_j := [\underline{c}_j,  \overline{c}_j]$, for some fixed constants $0\leq \underline{c}_j < \overline{c}_j < \infty$. Such bounds on water consumption may arise either from physical limitations—reflecting the finiteness and bounded capacity of the aquifer—or from constraints imposed by production requirements. Additional constraints may be added, such as minimum and/or maximum water consumed or traded by each agent. 

Overall, agent $j=1,\ldots,J$ selects a control $\pi_j = \big(C_j, \psi_j\big)$ from the feasible action set 
$$
\cA_j(\bfW):= \mathcal{C}_j \times [-\sum_{i\neq j} W_j, {W}_j],
$$ 
which is a compact in $\bR^{2}$, for every allocation $\bfW$.  More generally, agents may employ \textit{mixed} (or randomized) strategies, by choosing a probability distribution over the action set $\cA_j$.

Additionally, we  introduce a fictitious player, called the \textit{price-setter}, whose payoff  is 
$L_{J+1}(\bfpi, p) : = p  \big(\sum_{j=1}^J  \psi_j\big)^2$, 
and who is choosing only $p$.
Without loss of generality, we assume that $p\in[\underline{p}, \bar p]$, for some fixed $0< \underline{p}<\overline{p} <\infty$. Indeed, under some natural growth condition on $G_j$, as proved below, for prices larger than a threshold $\bar p$, all agents will prefer to sell their water rights, hence nobody will buy, yielding the critical no-trade situation $\forall j,\ \psi_j=0$. Similarly, for small positive prices, all agents will prefer to buy. We denote by $\sS_{J+1}$ the set of adapted processes $p$ with values in $[\underline{p},\bar{p}]$, which is also the set of admissible strategy for agent $J+1$. Moreover, by analogy a mixed strategy for agent $J+1$ is a probability distribution on $[\underline{p},\bar{p}]$.

Farmer  $j$  maximizes the value function:
\begin{align}
& \max_{C_j,\psi_j} G_j(C_j) + p\psi_j,  \label{eq:model1-Criteria}\\
\textrm{s.t. } & \underline{c}_j \leq C_j \leq \bar{c}_j,  \label{eq:model1} \\
& \psi_j\leq W_j-C_j, \quad j=1, \dots J,  \label{eq:model1-constrpsi}\\
& \sum_{i=1}^{J} \psi_i=0. \label{eq:model1-markt-clear}
\end{align}

Without loss of generality, we consider\footnote{Generally speaking, one may argue that the lower bound $\underline{c}_j$ could be set to zero, allowing a farmer to sell all of her water rights. In this case, our framework requires additional technical assumptions that $G_j'$ are bounded from above.} $0 < \underline{c}_j<\overline{c}_j$. 

As is traditional in these types of markets, we assume that agents play a non-cooperative game, with the equilibrium determined as a Nash equilibrium, which we recall next.

\begin{definition}
 A feasible strategy $(\bm{\pi}^*, p^*)$ is called a \textit{Nash Equilibrium (NE)} to the problem \eqref{eq:model1-Criteria}-\eqref{eq:model1-markt-clear} if
 \[
\forall j = 1, \dots J \quad \forall \pi' _j \in \sS_j, \hspace{1mm} L_j(\pi_j^*, \pi_{-j}^*, p^*) \geq L_j(\pi'_j, \pi_{-j}^*, p^*).
\]
\end{definition} 

From the general theory of (stochastic) games \cite{BasarZaccour2018}, one can show that there exists a NE in the class of mixed strategies. Sometimes, when the constraints of one player depend on the strategy of others, the games are called Generalized Nash Equilibrium problem.  Usually, this equilibrium is not unique. In particular, as shown in \cite{CialencoLudkovski2025}, for every price $p>0$ there exists  a (pure) NE.  From both a computational and an economic perspective, it is desirable to refine the problem and develop a method that yields a unique, or at least clearly computable, Nash equilibrium.

\begin{definition}
 A feasible strategy $(\bm 
\pi^a, p^a)$ is called \textit{Pareto optimal} if there is no other feasible strategy $(\bm{\pi}' , p')$ such that
\[
\forall  j = 1, \dots, J, \hspace{1mm} L_j(\bm{\pi}' , p') \geq L_j(\bm\pi^a, p^a) \quad \text{and} \quad \exists i \hspace{1mm} \hspace{1mm} L_i(\bm{\pi}' , p') > L_i(\bm\pi^a, p^a).
\]
\end{definition}
 
\subsection{First-best consumption problem}
Central part of the analysis plays the desired optimal amount to consume and trade by each trader $j$ individually, by solving the so-called the first-best consumption problem \eqref{eq:model1-Criteria}-\eqref{eq:model1-constrpsi}. We denote its solution by  $(C_j^\circ(p), \psi_j^\circ(p))$, and correspondingly we call $C_j^\circ(p)$ the first-best consumption $\psi_j^\circ(p)$ the first-best traded amount by farmer $j$ for price $p$.  With Lagrangian 
\[
\cL = G_j(C_j) + p\psi_j + \theta_j (C_j - \underline{c}_j) + \beta_j(\overline{c}_j- C_j) + \eta_j (W_j-C_j-\psi_j)
\]
we get the KKT conditions, for some multipliers $\theta_j, \beta_j, \eta_j\geq0$,   
\begin{align}
     G_j'(C_j) + \theta_j -\beta_j -\eta_j & =0, \label{eq:KKT2-foc}\\
 p-\eta_j & = 0, \\
\theta_j(C_j-\underline{c}_j) & = 0, \label{eq:KKT2-lower}\\
\beta_j(\overline{c}_j -C_j) & = 0, \label{eq:KKT2-upper}\\ 
 \eta_j(W_j-C_j-\psi_j) & = 0, \label{eq:KKT2-psi}
\end{align}
for $C_j\in[\underline{c}_j, \overline{c}_j]$, and \eqref{eq:model1-constrpsi} satisfied. Since $p>0$, we have $\eta_j=p>0$, and thus \eqref{eq:KKT2-psi} is binding $\psi_j^\circ=W_j-C_j$, which means that the agent will always buy or sell the remaining amount of water, which agrees with the assumptions that the agent ignores the market clearing conditions. Thus, \eqref{eq:KKT2-foc} becomes
\begin{equation*} 
  G_j'(C_j) = p + \beta_j - \theta_j.   
\end{equation*}
If $\theta_j=\beta_j=0$, then \eqref{eq:KKT2-lower}-\eqref{eq:KKT2-upper} are not binding, and hence $\underline{c}_j< C_j <\overline{c}_j$, in which case $G_j'(C_j) = p$. Recall that $G_j$ is concave, hence $G_j'$ is decreasing on $\underline{c}_j< C_j <\overline{c}_j$, and thus $G_j'(C_j) = p$ will be satisfied, if and only if 
\begin{equation*}
G_j'(\overline{c}_j) < p < G_j'(\underline{c}_j) 
\end{equation*}
with optimizer $C_j^\circ(p) = (G_j')^{-1}(p)$. If $p\geq G_j'(\underline{c}_j)$, then $\beta_j=0$, $C_j^\circ=\underline{c}_j$, and $\theta_j = p- G_j'(\underline{c}_j) \geq 0$. 
Similarly, if $p\leq G_j'(\overline{c}_j)$, then $\theta_j=0$, $C_j^\circ=\overline{c}_j$, and $\beta_j = G_j'(\overline{c}_j) - p \geq 0.$ 
Putting, 
\[
\underline{p}_j := G_j'(\overline{c}_j), \quad  \overline{p}_j := G_j'(\underline{c}_j),
\]
and combining the above, we deduce 
\begin{equation}\label{eq:model1-cnsmpt-optimal}
C_j^\circ(p) = 
\begin{cases}
    \overline{c}_j, & p \leq  \underline{p}_j \\
    (G_j')^{-1}(p), &  \underline{p}_j< p < \overline{p}_j  \\  
    \underline{c}_j, & p \geq \overline{p}_j 
\end{cases},
\end{equation}
and 
\begin{equation}\label{eq:psi-circ}
\psi_j^{\circ}(p) = W_j-C_j^{\circ}(p) = 
\begin{cases}
    W_j-\overline{c}_j, & p \leq \underline{p}_j \\
    W_j-(G_j')^{-1}(p), & \underline{p}_j < p < \overline{p}_j\\
    W_j - \underline{c}_j, & p \geq  \overline{p}_j
\end{cases}. 
\end{equation}

\begin{remark}\label{rem:Ccirc-cont}    
Note that by construction, $C_j^\circ(\cdot)$ is a continuous function. More generally, by Berge's maximum theorem, this result remains true by assuming only concavity of $G_j$. 
\end{remark}

In what follows we use the notations
$$ 
\underline{p}=:\min_{j} \underline{p}_j, \qquad \overline{p}=:\max_{j} \overline{p}_j. 
$$
We remark that for $p\leq \underline{p}$, all agents will prefer to buy water, and for $p\geq \overline{p}$, sell water, and thus for prices outside the interval $[\underline{p}, \overline{p}]$ no water rights are traded. 

\begin{theorem}\label{th:Pareto-price}
    If $\sum_{i=1}^J \underline{c}_i < \sum_{i=1}^J W_i< \sum_{i=1}^J \overline{c}_i$, then there exists a price $p^a$ at which each farmer maximizes her payoff $L_j$, and the market clears. Moreover, $p^a$ is a Pareto optimal solution.  
    If additionally $\max_j \underline{p}_j < \min_j \overline{p}_j$, then $p^a$ is unique.  
\end{theorem}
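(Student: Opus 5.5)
Define the aggregate first-best excess supply $\Psi(p) := \sum_{j=1}^J \psi_j^\circ(p) = \mathfrak{W} - \sum_{j=1}^J C_j^\circ(p)$, with $C_j^\circ$ given by \eqref{eq:model1-cnsmpt-optimal}. The plan is to find a zero of $\Psi$ by the intermediate value theorem, show that the zero gives a Pareto optimal allocation via a central-planner argument, and then get uniqueness from strict monotonicity.

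\textbf{Existence.} By Remark~\ref{rem:Ccirc-cont}, each $C_j^\circ$ is continuous, so $\Psi$ is continuous. Each $C_j^\circ$ is also nonincreasing in $p$, because $(G_j')^{-1}$ is decreasing by strict concavity. Hence $\Psi$ is nondecreasing. For $p \le \underline{p}$ we have $C_j^\circ = \overline{c}_j$ for all $j$, so $\Psi(p) = \mathfrak{W} - \sum_j \overline{c}_j < 0$. For $p \ge \overline{p}$ we have $C_j^\circ = \underline{c}_j$ for all $j$, so $\Psi(p) = \mathfrak{W} - \sum_j \underline{c}_j > 0$. The intermediate value theorem then gives $p^a \in (\underline{p}, \overline{p})$ with $\Psi(p^a) = 0$. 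At this price the profile $(C_j^\circ(p^a), \psi_j^\circ(p^a))$ solves each farmer's individual problem \eqref{eq:model1-Criteria}--\eqref{eq:model1-constrpsi}, so each farmer maximizes $L_j$, and the profile also satisfies \eqref{eq:model1-markt-clear}. Since each farmer's individual optimum is feasible under market clearing, it is also a best response in the constrained game.

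\textbf{Pareto optimality.} Suppose a feasible $(\bm\pi', p')$ weakly improves every $L_j$ and strictly improves one. Summing over $j$ and using $\sum_j \psi'_j = 0$ and $\sum_j \psi_j^\circ = 0$, the price terms cancel. This gives
\[
\sum_j G_j(C'_j) > \sum_j G_j(C_j^\circ(p^a)),
\]
where $C'_j \in \cC_j$ and $\sum_j C'_j \le \mathfrak{W}$ by \eqref{eq:model1-constrpsi} summed over $j$. Now add $p^a(\mathfrak{W} - \sum_j C'_j) \ge 0$ to the left side. The result is bounded above by $\sum_j \max_{c \in \cC_j} [G_j(c) + p^a(W_j - c)]$. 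Each of these maxima is attained at $C_j^\circ(p^a)$, so the bound equals $\sum_j G_j(C_j^\circ(p^a))$ because $\sum_j C_j^\circ(p^a) = \mathfrak{W}$. This contradicts the strict inequality.

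\textbf{Uniqueness.} This is the step needing care. On $(\max_j \underline{p}_j, \min_j \overline{p}_j)$, every agent is interior, so every $C_j^\circ = (G_j')^{-1}$ is strictly decreasing and $\Psi$ is strictly increasing there. Outside this interval some agents sit at their bounds, and $\Psi$ could be flat. I would show that any zero of $\Psi$ lies in this middle interval. Suppose instead $p \le \max_j \underline{p}_j$, i.e.\ $p \le \underline{p}_k$ for some $k$. Then $C_k^\circ(p) = \overline{c}_k$, and $\Psi$ is nondecreasing with a strict increase at points where some agent is interior. The main obstacle is the possibility that the zero lies on a flat piece where all agents are at bounds. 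I would handle this by noting that on such a flat piece $\Psi$ equals $\mathfrak{W} - \sum_j (\text{bounds})$. The condition $\max_j \underline{p}_j < \min_j \overline{p}_j$ ensures that no price has some agents at their upper bound and others at their lower bound simultaneously with all at bounds, except on $p \le \underline{p}$ or $p \ge \overline{p}$, where $\Psi \neq 0$. Hence at any zero at least one agent is interior, and combined with monotonicity, the zero set, being an interval on which $\Psi \equiv 0$, must be a single point. If this route proves delicate, I would fall back on uniqueness of the aggregate optimal allocation $C_j^\circ$ from strict concavity, together with the fact that any interior agent pins $p^a = G_j'(C_j^\circ)$.
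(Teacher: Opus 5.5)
Your existence and Pareto-optimality steps are the paper's argument. The Pareto step via the reduced problem $\max_{c\in\cC_j}[G_j(c)+p^a(W_j-c)]$ is equivalent to the paper's use of $G_j(C_j^\circ)+p^a\psi_j^\circ\ge G_j(C_j')+p^a\psi_j'$ summed with market clearing.

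Uniqueness is where you differ, and your route is the more complete one. The paper proves strict monotonicity of $Z=-\Psi$ only on $(\max_j\underline p_j,\min_j\overline p_j)$. It then concludes there is a unique zero ``in this interval'', which tacitly assumes the zero lies there. Zeros in $(\underline p,\max_j\underline p_j]$ or $[\min_j\overline p_j,\overline p)$ are never excluded. That tacit assumption is false, and so is your sentence ``any zero of $\Psi$ lies in this middle interval''. Take $G_1(c)=11c-c^2/2$ on $[1,10]$, $G_2(c)=13c-c^2/2$ on $[1,8]$, and $W_1=W_2=8$. Then $(\underline p_1,\overline p_1)=(1,10)$ and $(\underline p_2,\overline p_2)=(5,12)$, and all hypotheses hold. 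Yet $\Psi(p)=p-3$ on $(1,5]$, so $p^a=3\notin(5,10)$.

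Delete that sentence. The argument you actually run needs only that every $p\in(\underline p,\overline p)$ has an interior agent. You should prove this rather than assert it:
\begin{itemize}
\item If $p\le\max_j\underline p_j$, the agent $j_0$ attaining $\underline p$ satisfies $\underline p_{j_0}<p\le\max_j\underline p_j<\min_j\overline p_j\le\overline p_{j_0}$.
\item If $p\ge\min_j\overline p_j$, the agent $j_1$ attaining $\overline p$ satisfies $\underline p_{j_1}\le\max_j\underline p_j<\min_j\overline p_j\le p<\overline p_{j_1}$.
\item Otherwise every agent is interior.
\end{itemize}
An interior agent makes $\Psi$ strictly increasing near $p$, while the other agents' $C_j^\circ$ are nonincreasing. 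Hence $\Psi$ is strictly increasing on $[\underline p,\overline p]$. To the left and right it equals the nonzero constants $\mathfrak W-\sum_j\overline c_j<0$ and $\mathfrak W-\sum_j\underline c_j>0$. So the zero is unique among all prices, not merely within the middle interval.

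Your fallback also works: the planner's allocation is unique by strict concavity, and you then pin $p$ through an interior agent. But it needs the same interior-agent fact at a zero, so it adds nothing beyond the direct monotonicity argument.
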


\begin{proof} 
For $p>0$, denote 
$$
Z(p) := \sum_{i=1}^J C_i^\circ(p) - \sum_{i=1}^J W_i.
$$ 
In view of Remark~\ref{rem:Ccirc-cont}, $Z(\cdot)$ is continuous. 
By \eqref{eq:model1-cnsmpt-optimal}, we note that $C_j^\circ(p) \leq \underline{c}_j$, for $p\geq \overline{p}$, and for all $j=1\ldots,J$. 
Since $\sum_{i=1}^J \underline{c}_i < \sum_{i=1}^J W_i$, we have that $Z(p)<0$, for any $p\geq \overline{p}$. Similarly, for $p\leq \underline{p}$, we have that $Z(p)>0$. By intermediate value theorem, there exists $p^a$ such that $Z(p^a)=0$. Taking $\psi_j^\circ=W_j-C_j$, we obtain that the market clearing conditions \eqref{eq:model1-markt-clear} is satisfied. 

Denote by $\bfpi^a:=(\bfC^\circ(p^a), \boldsymbol{\psi}^\circ(p^a))$. Next we show that $(\bfpi^a,p^a)$ is Pareto optimal. We proceed by contradiction, and assume that $(\bfpi^a, p^a)$ is not Pareto optimal, that is there exists another feasible strategy $(\bm{\pi}' , p')$ such that
$$
L_j(\bm{\pi}' , p') \geq L_j(\bm{{\pi}^a}, p^a) \quad \forall j
$$
and there exists $k$ such that $L_k(\bm{\pi}' , p') > L_k(\bm{\pi}^a, p^a)$
Thanks to \eqref{eq:model1-markt-clear}, we have
$$
\sum_{i=1}^J L_i(\bfpi^a, p^a) = \sum_{i=1}^J G_i({C}^\circ_i), \quad \sum_{i=1}^J L_i(\boldsymbol{\pi}', p') = \sum_{i=1}^J G_i(C'_i). 
$$
From here, and the strict inequality for agent $k$, we deduce that 
\begin{eqnarray*}
   \sum_{i=1}^J L_i(\bm{\pi}' , p') > \sum_{i=1}^J L_i(\bm{{\pi}}^a, {p}^a),
\end{eqnarray*}
which implies that 
\begin{equation}\label{eq:parateo-strict}
   \sum_{i=1}^J G_i(C'_i) > \sum_{i=1}^J G_i({C}_i^\circ(p^a)).
\end{equation}
Since $C^\circ, \psi^\circ$ are maximizers, we have 
\[
G_j(C_j^\circ(p^a)) + p^a \psi_j^\circ(p^a) \geq G_j(C_j') + p^a \psi_j' \quad \forall j.
\]
Summing up the above, and using market clearing condition \eqref{eq:model1-markt-clear}, we obtain
$$
\sum_{j }G_j(C_j^\circ(p^a)) \geq \sum_{j} G_j(C_j'),
$$
which contradicts \eqref{eq:parateo-strict}.

To prove the uniqueness of $p^a$, note that the strict concavity of $G_j$ implies the strict monotonicity of $(G_j')^{-1}$, and hence the strict monotonicity of $C_j^\circ(p)$ on $(\underline{p}_j,\overline{p}_j)$. Consequently, under the assumption $ \max_i \underline{p}_i < \min_i \overline{p}_i$, it follows that $Z(p) = \sum_j (C_j^\circ(p) - W_j)$ is strictly monotone decreasing in $p$ on  $(\max_i \underline{p}_i, \min_i \overline{p}_i)$. Since by the existence part $Z(p)$ changes signs on $(\underline{p},\overline{p})\supset (\max_i \underline{p}_i, \min_i \overline{p}_i)$, we conclude it  admits a unique zero $p=p^a$ in this interval. This concludes the proof. 
\end{proof}

Next we prove some additional properties of the Pareto price $p^a$. 

\begin{proposition}
The maximum volume of water traded is attained at the Pareto price $p^a$.    
\end{proposition}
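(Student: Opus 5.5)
Fix a price $p$. At this price the desired supply and demand are
\[
S(p):=\sum_{j}\big(\psi_j^\circ(p)\big)^+, \qquad D(p):=\sum_j\big(\psi_j^\circ(p)\big)^-,
\]
using the first-best trades from \eqref{eq:psi-circ}. Trading is voluntary, so no seller sells more than she wishes and no buyer buys more than she wishes. Market clearing \eqref{eq:model1-markt-clear} then forces the realized traded volume to satisfy
\[
V(p)\le \min\{S(p),D(p)\},
\]
and this upper bound is attained, for instance by pro-rata. The plan is therefore to show that $p\mapsto \min\{S(p),D(p)\}$ is maximized at $p^a$, and that at $p^a$ the value $S(p^a)=D(p^a)$ is actually realized, because every farmer trades her first-best amount there by Theorem~\ref{th:Pareto-price}.

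\textbf{Step 1 (monotonicity).} By \eqref{eq:model1-cnsmpt-optimal}, each $C_j^\circ(p)$ is continuous and nonincreasing in $p$, so each $\psi_j^\circ(p)=W_j-C_j^\circ(p)$ is nondecreasing. Since $x\mapsto x^+$ is nondecreasing, $S$ is nondecreasing. Since $x\mapsto x^-$ is nonincreasing, $D$ is nonincreasing.

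\textbf{Step 2 (comparison with $p^a$).} By definition of $p^a$ we have $Z(p^a)=0$, which gives
\[
\sum_j\psi_j^\circ(p^a)=0, \qquad\text{so}\qquad S(p^a)=D(p^a).
\]
For $p\ge p^a$, Step 1 gives $\min\{S(p),D(p)\}\le D(p)\le D(p^a)$. For $p\le p^a$, it gives $\min\{S(p),D(p)\}\le S(p)\le S(p^a)=D(p^a)$. In both cases $V(p)\le D(p^a)=V(p^a)$.

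\textbf{Step 3 (the bound on $V(p)$).} Here I must justify that any feasible outcome at a non-Pareto price trades at most $\min\{S,D\}$, which is the main obstacle. The model allows any feasible $\psi_j\le W_j-C_j$, so I will interpret ``volume traded at price $p$'' as the volume generated when agents act on their first-best desires. The long side is rationed, for example pro-rata as in the next section, and no agent is forced to trade beyond her desired quantity. Once this convention is stated, the inequality is immediate. If a more general notion is intended, I would instead argue that an agent is never forced past her first-best trade, since by concavity she would rather not exceed it.
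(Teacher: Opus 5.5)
Your argument is correct and is essentially the paper's proof: $C_j^\circ$ is monotone, so desired supply is nondecreasing and desired demand nonincreasing in $p$; the two coincide at $p^a$; and the traded volume is the short side (your $x^{\pm}$ formulation packages the paper's growing seller set $\cS(p)$, shrinking buyer set $\cB(p)$, and termwise monotonicity into one step). The paper takes your Step~3 convention for granted, directly identifying the volume traded at $p>p^a$ with $-\sum_{j\in\cB(p)}(W_j-C_j^\circ(p))$, so stating it explicitly is fine. You should drop the fallback concavity remark, however, because with market clearing imposed as a hard constraint (the generalized-NE setting) trades can be forced, and no such bound holds without the voluntary-trade convention.
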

\begin{proof} 
Let us denote by $\cS(p)$ and $\cB(p)$ the set of sellers and buyers at price $p$. We recall that $C^\circ(p)$ is decreasing as function of price $p$. Thus, $W_j-C_j^\circ(p)$,  the desired amount of water to be traded by farmer $j$, is increasing in $p$.  Moreover, for $p_1<p_2$, we have  
\[
\cB(p_1) = \set{j \, :\, W_j-C_j^\circ(p_1) < 0 } \supset    \set{j \, :\, W_j-C_j^\circ(p_2) <0} = \cB(p_2),
\]
and correspondingly $\cS(p_1) \subset \cS(p_2)$. This implies that for $p>p^a$
\[
- \sum_{j\in \cB(p)} (W_j - C_j^\circ(p)) leq   - \sum_{j\in \cB(p^a) } (W_j - C_j^\circ(p^a)) = \sum_{j\in\cS(p^a)} (W_j - C_j^\circ(p^a)) \leq  \sum_{j\in\cS(p)} (W_j - C_j^\circ(p)) . 
\]
Thus, the amount of water traded at price $p$, equal to $-\sum_{j\in \cB(p)} (W_j - C_j^\circ(p))$, is smaller or equal than the amount traded at Pareto price $p^a$. Case $p<p^a$ is treated similarly. The proof is complete.   
\end{proof}

We recall that the \textit{social optimum} is the optimum chosen by the social planner whose sole objective is to maximize total social welfare $\sum_j L_j(C,\psi,p)$, formally solving 
\begin{align}\label{eq:model-social}
& \max_{C,\psi, p}  \sum_{j=1}^J L_j(C_j, \psi_j, p) = \max_{C_j,\psi_j,p} \sum_{j=1}^J  G_j(C), \\
\textrm{s.t. } \  &  \eqref{eq:model1}-\eqref{eq:model1-markt-clear}.
\end{align}

\begin{proposition}
\label{prop:SO-Pareto} Grant assumptions of of Theorem~\ref{th:Pareto-price}. Then, the social optimum coincides with the Pareto optimal solution $(\bfpi^a,p^a)$.
\end{proposition}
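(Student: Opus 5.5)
The plan is to show two things. First, $(\bfpi^a,p^a)$ is feasible for the social planner problem \eqref{eq:model-social} and attains its maximum. Second, any social optimum has the same consumption vector, which forces it to coincide with $(\bfpi^a,p^a)$ up to the (irrelevant) choice of price in the aggregate objective. By \eqref{eq:model1-markt-clear}, the planner's objective reduces to $\sum_j G_j(C_j)$, so the price drops out of aggregate welfare. Theorem~\ref{th:Pareto-price} gives existence of $p^a$ with $\psi_j^\circ(p^a)=W_j-C_j^\circ(p^a)$ summing to zero, so $(\bfpi^a,p^a)$ satisfies \eqref{eq:model1}--\eqref{eq:model1-markt-clear} and is admissible for \eqref{eq:model-social}.

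For optimality, I reuse the summation argument from the proof of Theorem~\ref{th:Pareto-price}. Take any feasible $(\bfC',\boldsymbol{\psi}',p')$. Since $(C_j^\circ(p^a),\psi_j^\circ(p^a))$ solves the individual first-best problem at price $p^a$, we have
\[
G_j(C_j^\circ(p^a))+p^a\psi_j^\circ(p^a)\ge G_j(C_j')+p^a\psi_j'
\]
for every $j$. This uses that $(C_j',\psi_j')$ satisfies \eqref{eq:model1}--\eqref{eq:model1-constrpsi}, which does not depend on the price. Summing over $j$, both $\sum_j\psi_j^\circ$ and $\sum_j\psi_j'$ vanish by market clearing. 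This gives $\sum_j G_j(C_j^\circ(p^a))\ge \sum_j G_j(C_j')$, so $\bfpi^a$ attains the social optimum.

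For the converse, I argue via strict concavity. The planner's feasible set of consumptions is essentially $\{\bfC\in\prod_j\cC_j:\sum_j C_j\le \mathfrak{W}\}$, since $\psi_j\le W_j-C_j$ and $\sum\psi_j=0$. This set is convex, and $\sum_j G_j$ is strictly concave on it, so the maximizing $\bfC$ is unique. Hence every social optimum has $\bfC=\bfC^\circ(p^a)$. Next I show the trades are also pinned down. Since $\sum_j C_j^\circ(p^a)=\mathfrak W$ (because $Z(p^a)=0$), summing $\psi_j\le W_j-C_j$ with $\sum\psi_j=0$ forces equality in each constraint, so $\psi_j=\psi_j^\circ(p^a)$. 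Finally, with consumption and trades fixed, I check that the resulting allocation is Pareto optimal. Conversely, by the contradiction argument in Theorem~\ref{th:Pareto-price}, any Pareto improvement would strictly raise $\sum_j G_j$, which is impossible at a social optimum.

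The main obstacle is the role of the price. The planner's objective is invariant in $p$, so the statement should be read as equality of the allocation $(\bfC,\boldsymbol\psi)$ together with the observation that $p^a$ is an admissible (indeed the market-clearing) supporting price. I would make this explicit. I would also note that uniqueness of $p^a$ under the additional hypothesis of Theorem~\ref{th:Pareto-price} identifies the price as well, since it is the unique price whose first-best demands clear at the socially optimal consumption.
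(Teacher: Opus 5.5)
Your proof is correct, but it runs in the opposite direction from the paper's.

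\textbf{The paper's route (necessity).} It first uses an $\varepsilon$-perturbation to show that $\sum_j C_j\le\mathfrak W$ binds at any social optimum; this step relies on $\sum_j W_j<\sum_j\overline c_j$ and on strict monotonicity of the $G_j$. From this it deduces $\psi_j^{SO}=W_j-C_j^{SO}$ and reduces the planner's problem to the equality-constrained program \eqref{eq:SO3}--\eqref{eq:SO3-1}. It then reads off $C_j^{SO}=C_j^\circ(\lambda^{SO})$ from the KKT system and identifies the multiplier $\lambda^{SO}$ with $p^a$ through $Z(\lambda^{SO})=0$.

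\textbf{Your route (sufficiency plus uniqueness).} The first-welfare-theorem summation from Theorem~\ref{th:Pareto-price} shows directly that $\bfpi^a$ attains the planner's maximum. Strict concavity of $\sum_j G_j$ on the convex set $\{\bfC\in\prod_j\cC_j:\sum_jC_j\le\mathfrak W\}$ makes the optimal consumption unique. Finally, $Z(p^a)=0$ forces every constraint $\psi_j\le W_j-C_j$ to bind, which pins down the trades.

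\textbf{Comparison.} Your argument avoids both the perturbation step and the KKT analysis of the reduced program. More substantively, it identifies the optimal allocation without needing $Z$ to have a unique zero. The paper's step $\lambda^{SO}=p^a$ tacitly relies on that uniqueness, which Theorem~\ref{th:Pareto-price} only provides under the extra hypothesis $\max_j\underline p_j<\min_j\overline p_j$. Your argument in fact shows that all zeros of $Z$ support the same consumption vector. What the paper's route buys is an interpretation of the market price as the shadow price of the aggregate water constraint; that is the sense in which the planner ``selects'' $p^a$ even though its objective is price-invariant.

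Your explicit remark that $p$ is determined only as a supporting, market-clearing price (unique under the extra hypothesis) is the more careful reading of the statement. Your final check that the resulting allocation is Pareto optimal is redundant: that allocation is $\bfpi^a$, and Theorem~\ref{th:Pareto-price} already establishes its Pareto optimality.
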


\begin{proof} 
The feasible set of the social planner's problem \eqref{eq:model-social} for decision variables $\psi_j, C_j$ is
$$
\cA = \Big\{ (C,\psi) \ :\ C_j\in[\underline c_j,\overline c_j],\ \psi_j\le W_j-C_j\ \ \forall j,\ \ \sum_{j=1}^J \psi_j = 0\Big\}.
$$
For $(C,\psi)\in\cA$, summing $\psi_j\le W_j-C_j$ over $j$ and using $\sum_j\psi_j=0$, we deduce
$$
\sum_{j=1}^J C_j \le \sum_{j=1}^J W_j.  
$$
We claim that at any social optimum $(C^{SO},\psi^{SO})$, the last inequality becomes equality. Indeed, suppose that 
\begin{equation}\label{eq:SO1}
\sum_j C_j^{SO} < \sum_j W_j. 
\end{equation}
Then, there exists an index $i$ such that $\psi_i^{SO} < W_i - C_i^{SO}$. Moreover, by the statement assumptions, $\sum_j W_j < \sum_j \overline c_j$, and thus there exists an index $k$ such that $C_k^{SO}<\overline{c}_k$, else all $C_j=\overline{c}_j$, and $\sum_j C_j^{SO} = \sum_j \overline c_j > \sum_j W_j$ contradicting \eqref{eq:SO1}.   
Fix $\varepsilon>0$ small enough (specified below) and define $C_k^\varepsilon = C_k^{SO}+\varepsilon$ and $C_j^\varepsilon=C_j^{SO}$ for $j\neq k$.

If  $i\neq k$, then define $\psi_i^\varepsilon=\psi_i^{SO}+\varepsilon$, $\psi_k^\varepsilon=\psi_k^{SO}-\varepsilon$, and keep all other unchanged $\psi_j^\varepsilon=\psi_j^{SO}$ for $j\notin\{i,k\}$. Choose $\varepsilon$ small enough so that 
$$
0<\varepsilon<\min\Big\{\overline c_k - C_k^{SO},\ W_i-C_i^{SO}-\psi_i^{SO}\Big\}.
$$
Clearly, $C_k^\varepsilon\le\overline c_k$. Moreover, since $i\neq k$, we also have $\psi_i^\varepsilon\leq W_i-C_i^{SO}=W_i-C_i^\varepsilon$, and  $\psi_k^\varepsilon = \psi_k^{SO}-\varepsilon \leq W_k-C_k^{SO}-\varepsilon = W_k-C_k^\varepsilon$. Since all other constraints are untouched, we obtain 
$$
\sum_j \psi_j^\varepsilon = \sum_j\psi_j^{SO} + \varepsilon - \varepsilon = 0,
$$
making $(C^\varepsilon,\psi^\varepsilon)\in\cA$ feasible.

If $i=k$, then take $\psi_j^\varepsilon=\psi_j^{SO}$ for all $j$, i.e. no change in strategy $\psi$, and thus the market clearing condition \eqref{eq:model1-markt-clear} remains satisfied by $\psi^{SO}$. Choose $\varepsilon$ such that
$$
0<\varepsilon<\min\Big\{\overline c_k-C_k^{SO},\ W_k-C_k^{SO}-\psi_k^{SO}\Big\}.
$$
Then $C_k^\varepsilon = C_k^{SO} + \varepsilon \leq \overline c_k$, and $\psi_k^{SO} \leq W_k-C_k^{SO}-\varepsilon = W_k - C_k^\varepsilon$. Thus, all constraints are satisfied, and in this case we again have $(C^\varepsilon, \psi^\varepsilon) \in \cA$. In general, the  strategy $(C^\varepsilon, \psi^\varepsilon) \in \cA$ is feasible.     

Recall that $G_j$ is strictly increasing, and combined with the above, we deduce
$$
\sum_{j} G_j(C_j^\varepsilon) = \sum_{j\ne k} G_j(C_j^{SO}) + G_k(C_k^{SO}+\varepsilon) \;>\; \sum_j G_j(C_j^{SO}),
$$
contradicting optimality of $(C^{SO},\psi^{SO})$, and hence \eqref{eq:SO1} must hold with equality
\begin{equation}\label{eq:SO2}
\sum_{j=1}^J C_j^{SO} = \sum_{j=1}^J W_j. 
\end{equation}

Let $d_j = (W_j - C_j^{SO}) - \psi_j^{SO}$. By the feasibility conditions $d_j\geq 0$.  Using \eqref{eq:SO2} and market clearing condition $\sum_j \psi_j^{SO}=0$, we get at once that $\sum_{j=1}^J d_j =0$, and since each $d_j\ge0$, we conclude that $d_j=0$ for every $j$, which implies that 
\begin{equation}\label{eq:SO2-1}
\psi_j^{SO} = W_j - C_j^{SO}, \qquad j=1,\dots,J.
\end{equation}
Thus, only $C_j, \psi_j$ such that $C_j+\psi_j=W_j$ can be included in the feasible set, which combined with the market clearing condition \eqref{eq:model1-markt-clear}, problem \eqref{eq:model-social} is equivalent to
\begin{align}
& \max_{C_j\in[\underline c_j,\overline c_j]} \sum_{j=1}^J G_j(C_j) \label{eq:SO3}\\
\textrm{s.t.} & \ \ \sum_{j=1}^J C_j = \sum_{j=1}^J W_j. \label{eq:SO3-1}
\end{align}
By strict concavity of $G_j$,  and using  $\sum_i\underline c_i<\sum_i W_i<\sum_i\overline c_i$, this is a concave optimization program, for which the KKT conditions are necessary and sufficient for global optimality. With multiplier $\lambda$ for the equality constraint and $\theta_j,\beta_j\ge0$ for the box constraints, stationarity gives, for each $j$,
\begin{equation}\label{eq:SO4}
G_j'(C_j) + \theta_j - \beta_j - \lambda = 0, \qquad \theta_j(C_j-\underline c_j)=0,\qquad \beta_j(\overline c_j-C_j)=0, 
\end{equation}
which coincides with the KKT system \eqref{eq:KKT2-foc}, \eqref{eq:KKT2-lower}, \eqref{eq:KKT2-upper} of the individual first-best problem with $\lambda$ in place of the market price. Hence, for any $(C^{SO},\lambda^{SO})$ solving \eqref{eq:SO4},
\begin{equation}\label{eq:SO5}
C_j^{SO} = C_j^\circ(\lambda^{SO}), \qquad j=1,\dots,J.
\end{equation}
The equality constraint \eqref{eq:SO3-1} implies that  $\sum_j W_j - \sum_j C_j^\circ(\lambda^{SO}) =0$. In the notation of Theorem~\ref{th:Pareto-price}, we have that $Z(\lambda^{SO})=0$. Hence, in view of the same theorem $\lambda^{SO}=p^a$, and by \eqref{eq:SO5}, $C^{SO} = \bfC^\circ(p^a) = \bfC^a$. Finally, by \eqref{eq:SO2-1},
$$
\psi_j^{SO} = W_j - C_j^{SO} = W_j - C_j^\circ(p^a) = \psi_j^\circ(p^a) = \psi_j^a.
$$
The proof is complete. 
\end{proof}

\section{A pro-rata approach to market imbalance}\label{sec:prorata}
We recall that $C_j^\circ= C_j^\circ(p)$ denotes the \textit{first-best water consumption} farmer $j$ would like to pump, for an exogenously fixed price $p$. Consequently, the desired traded quantity is $\psi_j^\circ = W_j-C_j^\circ$. If $\psi_j^\circ<0$, then it 
represents the farmer's $j$’ desired net bid (or demand) amount of water in the open market. Similarly, if  $\psi_j^\circ>0$, then it denotes the desired net ask (or supply) amount. We also recall that $C^\circ_j(p)$ is decreasing and continuous in $p$. 
Let  $\widetilde{p}_j=\min \set{p \ : \ C^\circ_j(p)=W_j}$, which can be seen as \textit{indifference price} at which the farmer's allocated water rights matches her first-best demand. If  $\underline{c}_j<W_j<\overline{c}_j$, then $\underline{p}_j< \tilde p_j < \overline{p}_j$.
Therefore, if $\underline{c}_j<W_j<\overline{c}_j$, for all $j$, then for 
$$
p \geq \widetilde{p}_u:=\max_j \widetilde{p}_j,
$$ 
no one is willing to buy water, resulting in no trade. Similarly, for 
\[
p\leq \widetilde{p}_l := \min_j \widetilde{p}_j,
\]
no agent is willing to sell, hence there are no trades. We also recall that regardless of the allocations $W_j$'s, no trades occur for any $p\leq \underline{p}$ or $p\geq \overline{p}$. Therefore, trades will happen only for 
$$
\underline{p} \vee \widetilde{p}_l \leq p \leq \widetilde{p}_u \wedge \overline{p}.
$$
Moreover, if we only assume that $\underline{c}_j\leq W_j$, that is,  each farmer has sufficient  initial water to meet their minimal requirements, then trades occur for 
\[
\widetilde{p}_l \leq p \leq \widetilde{p}_u \wedge \overline{p}.
\]
As before, let $\sS(p)$ and $\sB(p)$ denote the set of sellers and buyers respectively for a fixed price $p$, and denote by $\bfS(p)$, and $\bfB(p)$ the aggregate ask (or supply) and  bid (or demand) 
\begin{align*}
\bfS(p) := \sum_j (W_j-C_j^\circ(p))\1_{W_j-C_j^\circ(p)\geq0} =  \sum_{j\in\sS} (W_j-C_j^\circ(p)), \\ 
\bfB(p):= \sum_j (C_j^\circ(p)-W_j)\1_{W_j-C_j^\circ(p)\leq0} = \sum_{j\in\sB} (C_j^\circ(p)-W_j).   
\end{align*}
For simplicity of writing, when $p$ is fixed, and if no confusion arises, we simply write $\cB,\cS,\bfS,\bfB$. 

We postulate that the   market-maker will allocate the  water imbalance using a \textit{pro-rata algorithm}. That is, the water sold/bought by an agent is proportional to their individual ask (or bid) amount relative to the total supply(or demand).  Specifically:  

\medskip \noindent
\textit{Case A:} If $\bfS \leq  \bfB$, the supply is less than the demand, then, each agent  $j\in\sB$ on the buy side will consume   
\begin{align}
\widetilde{C}_{j} = \underbrace{W_{j}}_{\substack{\text{initial}\\ \text{allocation}}} + 
\underbrace{\frac{C_{j}^\circ - W_{j}}{\bfB}}_{\substack{\text{proportion of}\\ \text{total demand}}} 
\times \underbrace{\bfS}_{\substack{\text{total}\\ \text{supply}}}.
\end{align}
It is easy to check  that $\widetilde C_{j} \leq  C_j^\circ$ for $j\in\sB$, and for all $j\in\sS$. Hence, each buyer will not receive more than  desired first-best amount, and in particular the upper bound $\widetilde C_j \leq \bar c_j$, for $j\in\sB$, is satisfied.  In contrast, each seller $j\in\sS$ consumes her optimal amount $C_j^\circ$.  

\medskip\noindent
\textit{Case B:} If $\bfS>\bfB$, the supply is larger than the demand, then, for $j\in\sS$, each seller will sell 
\[
\underbrace{\frac{W_j-C_j^\circ}{\bfS}}_{\substack{\text{proportion of}\\ \text{total supply}}} \times \underbrace{\bfB}_{\substack{\text{total}\\ \text{demand}}}. 
\]
Since $\bfS>\bfB$, she will sell less than she desires, and thus, the unsold amount 
\[
W_j-C_j^\circ - \frac{W_j-C_j^\circ}{\bfS}\cdot \bfB = \frac{W_j-C_j^\circ}{\bfS}(\bfS-\bfB)
\]
will be used for additional production of goods, up to the maximum amount $\bar{c}_j$. Hence, the total amount consumed by the seller $j\in\sS$, is equal to 
\begin{align}
\widetilde{C}_{j} = \left(C_{j}^\circ + \frac{ W_j - C_j^\circ}{\bfS } \cdot 
( \bfS - \bfB) \right)\wedge \bar{c}_j. 
\end{align}
It is easy to show that $\widetilde{C}_j \leq W_j$. Note that in this case, in principle, it may happen that the unsold water by agent $j\in\sS$ is larger than the water she can use $\bar{c}_j-C_j^\circ$. In this case, this water will be simply left unused. This is counterintuitive, and the farmer  may be willing to sell it for less. However, if $W_j \leq \bar{c}_j$, then the boundary $\bar{c}_j$ will not be bypassed, and no unused water will be left. From practical point of view, it is also reasonable to assume that $W_j\leq \bar{c}_j$, which we postulate in what follows.  

\begin{theorem}
Under the pro-rata mechanism, every price $p$ is a Nash equilibrium.  
\end{theorem}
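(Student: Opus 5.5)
Fix an arbitrary price $p\in[\underline{p},\overline{p}]$ and take as candidate equilibrium the pro-rata allocation: $(\widetilde C_j,\widetilde\psi_j)$ with $\widetilde\psi_j=W_j-\widetilde C_j$ on the rationed side. On the short side, each agent takes her first-best pair $(C_j^\circ(p),\psi_j^\circ(p))$. We must show that no farmer $j$ can raise $L_j$ by a unilateral deviation $\pi_j'$, given the others' strategies and the mechanism. We must also show that the price-setter has no profitable deviation.

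The first step is feasibility and market clearing. In Case A, buyers receive in total $\sum_{j\in\sB}(\widetilde C_j-W_j)=\bfS$ and sellers deliver $\bfS$, so $\sum_j\widetilde\psi_j=0$. In Case B, sellers deliver $\bfB$ in total, so again the sum vanishes. The box constraints hold: in Case A, $W_j\le\widetilde C_j\le C_j^\circ\le\overline c_j$ for buyers; in Case B, $C_j^\circ\le \widetilde C_j\le W_j\le\overline c_j$ for sellers, using the standing assumption $W_j\le\overline c_j$. Hence the price-setter's payoff $p(\sum_j\psi_j)^2$ equals $0$ for every $p$. Since the pro-rata rule enforces clearing at every price, every $p$ gives her the same payoff, and she is indifferent. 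This is where the statement ``every price'' comes from.

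Next, the farmers, splitting by the short and long side. On the short side (sellers in Case A, buyers in Case B), agent $j$ already obtains her unconstrained first-best $(C_j^\circ(p),\psi_j^\circ(p))$. By the KKT derivation leading to \eqref{eq:model1-cnsmpt-optimal}--\eqref{eq:psi-circ}, this maximizes $G_j(C_j)+p\psi_j$ over the relaxed constraints \eqref{eq:model1}--\eqref{eq:model1-constrpsi}. So no deviation can do better.

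On the long (rationed) side, take a buyer $j$ in Case A. Her feasible trades are limited by the others: under clearing with the others' strategies fixed, she can purchase at most what is available to her, namely $\widetilde\psi_j$. Her payoff is $G_j(W_j-\psi_j)+p\psi_j$ with $-\psi_j\le -\widetilde\psi_j\le -\psi_j^\circ$, and this is concave in $\psi_j$ and maximized at $\psi_j^\circ$. On the feasible interval $[\widetilde\psi_j,W_j-\underline c_j]$, the maximum is therefore attained at the endpoint closest to $\psi_j^\circ$, namely $\widetilde\psi_j$. Buying less (increasing $\psi_j$) moves her away from $\psi_j^\circ$ and, by concavity, lowers her payoff. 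Given that $\psi_j$ is fixed, consuming less than $W_j-\psi_j$ is dominated because $G_j$ is increasing. Case B sellers are treated symmetrically: they can sell at most their pro-rata share, so $\psi_j\le\widetilde\psi_j\le\psi_j^\circ$. Concavity of $\psi\mapsto G_j(\widetilde C)+p\psi$ on this range makes $\widetilde\psi_j$ optimal, and the unsold water is optimally consumed, giving $\widetilde C_j=W_j-\widetilde\psi_j$ with no cap binding.

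The main obstacle is formalizing what a unilateral deviation means under the generalized (shared) constraint \eqref{eq:model1-markt-clear} combined with the rationing rule. One must argue that, with others fixed, agent $j$'s attainable $\psi_j$ is exactly the one-sided interval above, and not a single point. The second option would make the result trivial, and I would avoid it. I would handle this by defining each agent's action as her submitted order, with the executed trade equal to the pro-rata fill of that order. A rationed agent who submits a larger order would then still be capped by the fixed total on the other side, yet could gain a larger share. For that reason I would state the deviation set as feasible trades compatible with clearing given $\psi_{-j}$, exactly as in the generalized NE framework of \cite{CialencoLudkovski2025}, and then apply the concavity argument.
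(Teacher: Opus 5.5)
Your proposal is correct and follows essentially the same route as the paper's proof. Agents on the short side already sit at their first-best $(C_j^\circ(p),\psi_j^\circ(p))$, while a rationed agent cannot obtain more than her pro-rata fill, and taking less (or leaving water unused) only lowers her payoff. Your concavity argument, which uses $\widetilde C_j\le C_j^\circ$ explicitly, is a slightly more careful version of the paper's appeal to monotonicity. Because you prove optimality over the larger one-sided interval, the conclusion also holds a fortiori under the paper's hard clearing constraint, where $\psi_j$ is pinned to a single value once $\psi_{-j}$ is fixed.
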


\begin{proof} Fix $p$. 
\textit{Case A:} If $\bfS\leq\bfB$, in view of the above, each agent \( j \in \sS \) adopts her first-best consumption $C_{j}^\circ(p)$, and every agent $ j' \in \sB$ consumes
\[
\tilde{C}_{j'} = W_{j'} + \frac{C_{j'}^\circ - W_{j'}}{\bfB } \bfS,  
\]
and \( \tilde{C}_{j'} \leq C_{j'}^\circ \). After pro-rata mechanism, agents $j\in \sS$ will not deviate since they already pump at the first-best level. 
Each agent $ j' \in \sB $ can not  unilaterally increase her profit, since that will require more water than $\tilde{C}_{j'}$ which was uniquely determined by the pro-rata allocation. If any agent \( j' \in \sB \) decides to consume less than \( \tilde{C}_{j'} \), again, since the objective function is a strictly increasing function of consumption, the profit will decrease, and thus not a Nash equilibrium. 

\noindent\textit{Case B:} If $\bfS >\bfB$, then all agents \( j' \in \sB \) will meet their first-best consumption, and each seller will follow the consumption strategy 
\[
\widetilde{C}_{j} = \left(C_{j}^\circ + \frac{ W_j - C_j^\circ}{\bfS } \cdot ( \bfS - \bfB) \right)\wedge \bar{c}_j. 
\]
Similar to Case A, buyers are already at first-best, and sellers cannot deviate unilaterally in their favor due to monotonicity of the objective function.  

This concludes the proof. 
\end{proof}

\begin{remark} Instead of using as the baseline for pro-rata the proportion of the demand or supply, one may consider an allocation based on the assigned water rights $W_j$. For example, if $\bfS>\bfB$, then each seller $j\in S$, gets to sell the amount   
\begin{align}\label{eq:prorata-wj}
\bar \cI_j:=\underbrace{\frac{W_j}{\sum_j(W_j\cdot \1_{W_j\geq C_j^\circ})}}_{\substack{\text{seller $j$’s share of}\\ \text{sellers’ water rights}}} \times \underbrace{\bfB}_{\substack{\text{total}\\ \text{demand}}}.  
\end{align}
However, \eqref{eq:prorata-wj} may lead to inconsistency, such as $\bar\cI_j \geq W_j-C_j^\circ$, meaning that a seller may get a larger share of the demand than she is willing to sell. A workaround is to have several allocation rounds, by eliminating those who sold all their water rights. While viable, this approach leads to a more nonlinear allocation rule and, from an economic perspective, is less desirable as it favors grandfathered water rights over market competitiveness. It may nonetheless be considered for comparison purposes.   
\end{remark}

Going forward, we use $V_j(p) := L_j(C_j, \psi_j, p)$ to denote the profits of Farmer $j$ given the price $p$. This is interpreted as mapping market prices to the equilibrium revenue of the farmers based on a known imbalance rationing mechanism, taken to be pro-rata by default.

\subsection{Illustrative Example}\label{sec:illustrative_example}

Let us consider a particular illustrative case of a groundwater market model with $J=4$ agents, profit functions of power type $G_j(C) = f_j C^{\alpha_j}$, $\underline{c}_j\leq C\leq \bar c_j$, and the  set of parameters given in Table~\ref{tab:water_params}. We also set $f_j=200$, for all $j$.

\begin{table}[h!]
\centering
\setlength{\tabcolsep}{10pt} 
\renewcommand{\arraystretch}{1.2} 
\begin{tabular}{ccccc}
\toprule
\cmidrule(lr){3-4} 
$j$ & $\alpha_j$ & $W_j$ & $\underline{c}_j$ & $\overline{c}_j$  
 \\
\midrule
1 & 0.5 & 200 & 60  & 250   \\
2 & 0.7 & 280 & 80  & 300  \\
3 & 0.6 & 260 & 100 & 350 \\
4 & 0.8 & 275 & 120 & 400  \\
\bottomrule
\end{tabular}
\caption{Model parameters for the case study in Section \ref{sec:illustrative_example}}
\label{tab:water_params}
\end{table}

By direct computation we find that 
\[
 \underline{p}= 6.324 \quad \widetilde{p}_l = 7.071, \quad \widetilde{p}_u = 52.029, \quad \bar p = 61.416.  
\]
The realized consumption functions $\widetilde C_j(p)$ and profits $V_j(p)$ under pro-rata allocation scheme (solid lines), as well as their individually desired optimal consumption $C_j^\circ(p)$ and corresponding first-best profits (dashed lines) as a function of $p$ are displayed in Figure~\ref{fig:pro-rata1}. 

\begin{figure}
    \centering
    \includegraphics[width=0.99\linewidth]{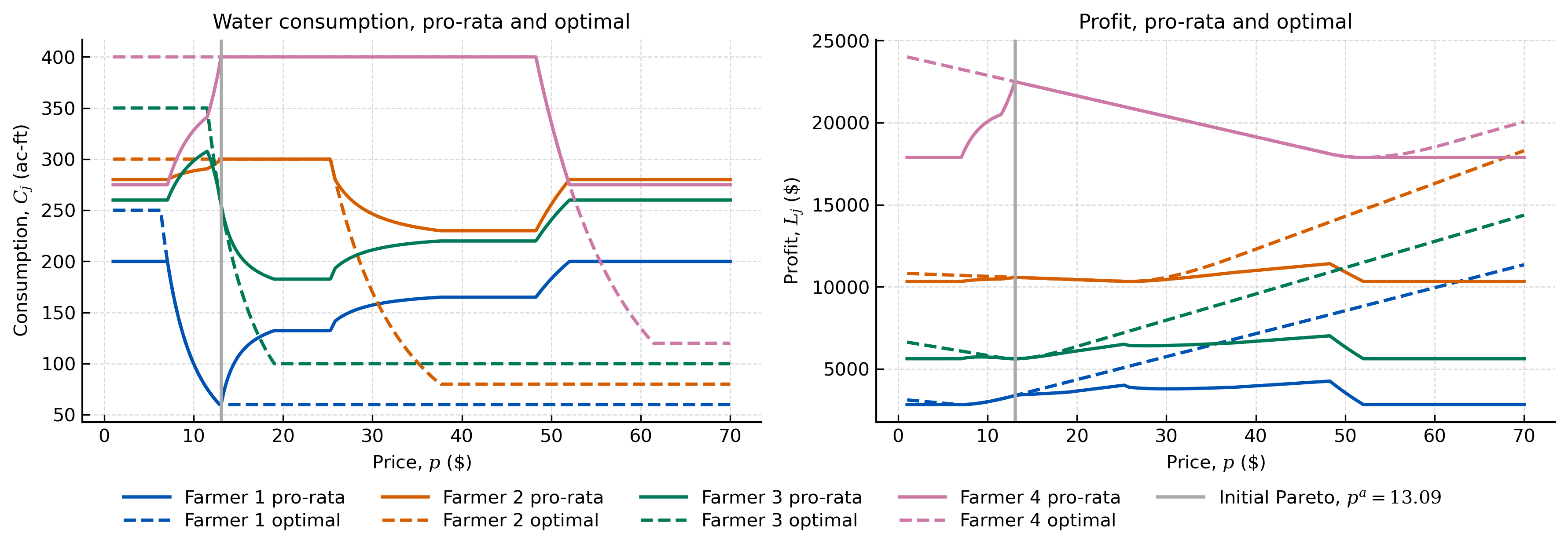}
\caption{Pro-rata (solid lines)  and first-best desired consumption (dashed lines) for the illustrative  example in Table \ref{tab:water_params}. Pareto price $p^a$ shown as gray vertical line. Left panel: water consumption $p \mapsto \widetilde C_j(p), p \mapsto C_j^\circ(p)$. Right panel:  Profit $p\mapsto V_j(p), \ p\mapsto L_j(C^\circ_j(p), \psi^\circ_j(p), p)$.}
    \label{fig:pro-rata1}
\end{figure}

For prices $p \le \widetilde{p}_l = 7.1$, all farmers prefer to purchase water, so each fully consumes their entire water entitlement and no trade occurs at these prices. The dashed consumption curves show that, at such low prices, farmers would like to consume more, but their consumption is constrained by the limited availability of water rights.

Within the trading window, $ \widetilde{p}_l\leq p \leq \widetilde{p}_u$, some farmers choose to buy while others sell. Farmer~1 (blue curves) is the first one willing to sell, while other farmers act as buyers, acquiring water from this seller. Thus, Farmer~1 achieves her first-best consumption, while there is not enough supply for buyers to reach their own desired optima. As the price $p$ increases, Farmer's~1 own consumption decreases, causing her to sell more water, and thus buyers’ consumptions increase. For Farmer~1, selling is more profitable than consuming at that range,  explaining the simultaneous increase in both the seller’s and buyers’ profits. At $p=11.15$  Farmer~3 (Green lines) also becomes a seller, and a similar dynamic unfolds between the 2 sellers (F1, F3) and the two buyers (F2, F4).

The solid gray  vertical  line marks the Pareto price, equal to the equilibrium social welfare price $p^a=13.09$, at which the aggregate demand  is equal to the aggregate supply ($\bfS=\bfB$), ensuring that each farmer achieves their first-best desired consumption at that price. 

For prices $p > p^a$, supply exceeds demand ($\bfS>\bfB$), buyers meet their first-best consumption levels, and sellers must consume more than their personal optimal. Hence, the desired optimal consumption and profit for buyers overlap after $p^a$ (solid and dashed curves coincide), whereas sellers’ profits diminish because consuming beyond their optimal level is less beneficial than selling. Finally, when $p > \widetilde{p}_u$, all farmers prefer to sell rather than buy, hence consuming only their water rights, and maintaining constant profit levels.

Overall, under pro-rata allocations (cf.~Figure \ref{fig:pro-rata1}), an agent’s consumption is no longer necessarily monotone in 
$p$, and profits exhibit more complex behavior compared to individually optimal profit levels. These local changes in groundwater prices may have unintuitive market effects, which we examine next.

 \section{Pro-rata with constraints on water traded }\label{sec:constrains}

Due to numerous unpriced externalities and the sensitive issue of maintaining a viable agricultural sector, most regulators do not allow unfettered trading within the groundwater market. In this section, we investigate market behaviors
when the volume of water that each agent may sell or buy is constrained. This is motivated by practical regulations imposed by district or state authorities \cite{heard2019sgma,edwardsRules}. A common mechanism are directional rules that impose one-way trading restrictions, where some agents are not allowed to buy, and others are not allowed to sell, generally linked to specific sensitive sub-zones, such as areas subject to saline intrusion. Such restrictions are intended to sustain groundwater levels, avoid drying out agricultural land, limit market power manipulation, and provide protection against droughts and other uncertainties. 
Constrained trading is likely to generate an imbalance between first-best water demand and supply and the feasible trading amounts and hence naturally calls for pro-rata or other allocations mechanisms.  Importantly, as we show below trading caps shift the Pareto price, which give the market-maker (or regulator) a tool for studying the impact of trading restrictions on price formation.

Motivated by above, we study several extensions of base or unrestricted model \eqref{eq:model1-Criteria}--\eqref{eq:model1-markt-clear}, by adding constraints on the traded amounts $\psi_j$. The resulting models are labeled Model~S, Model~B, and Model~SB, corresponding respectively to restrictions on sales, restrictions on purchases, and restrictions on both sales and purchases of water rights. Correspondingly, for each model  we use the notations $C_j^{\circ,k}, \psi_j^{\circ,k}, p^{a,k}$, with $k\in\set{S,B,SB}$.

We start with analysis of Model~SB, where the market maker (or regulator) imposes additional constraints that each farmer sells no more than a fraction $0 \le a_j \le 1$ and buys no more than $b_j \geq 0 $ of her total water rights $W_j$.   Hence, farmer $j$ solves the following modified problem
\begin{align}
& \max_{C_j,\psi_j} [G_j(C_j) + \psi_j p],
\label{eq:model-BS-Criteria}\\
\textrm{s.t. } \quad & \ \underline{c}_j \leq C_j \leq \bar{c}_j,  \label{eq:model-SB}\\
& -b_j W_j \leq \psi_j \leq a_jW_j \wedge  (W_j-C_j), \qquad j=1, \dots J, \label{eq:model-SB-Constrpsi}\\
& \sum_{i=1}^J \psi_i=0.
\label{eq:model-BS-markt-clear}
\end{align}
We note that if $a_j\geq  (W_j-\underline{c}_j)/W_j$, in particular $a_j=1$, then there are no restrictions on the sold amount. Correspondingly, for large enough $b_j$, for example for $b_j \geq (\sum_{i\neq j} (W_i - \underline{c_i}))/W_j$, there is no restriction on purchased amount for farmer $j$.  

Note that, in contrast to the unrestricted case -- where each agent may buy or sell an unlimited amount of water and can therefore always reach the lower and upper consumption bounds -- in the restricted trade market this is no longer automatic. To ensure the production bounds remain attainable, we assume throughout that
\begin{equation}\label{eq:SB1}
\underline{c}_j \leq W_j(1+b_j), \qquad W_j(1-a_j) \leq \overline{c}_j, \qquad \forall j, 
\end{equation}
that is, the water rights $W_j$ together with the maximum amount farmer $j$ is allowed to purchase (respectively sell) are sufficient to reach the lower bound $\underline{c}_j$ (respectively the upper bound $\overline{c}_j$) of the box constraint \eqref{eq:model-SB}. 

Let us define, what we will show later to be the effective bounds for the consumption $C_j$ 
\[
\underline{c}_j^{SB}= \underline{c}_j \vee W_j(1-a_j), \qquad  \overline{c}_j^{SB}= \overline{c}_j \wedge W_j(1+b_j). 
\]
Clearly $\emptyset \neq [\underline{c}_j^{SB}, \overline{c}_j^{SB}]\subset [\underline{c}_j, \overline{c}_j]$.

The next Theorem provides the first-best consumption quantities $C^{\circ, SB}$ under the above constraints for any groundwater price $p$.

\begin{theorem}\label{th:market-SB}
For any fixed price $p>0$, the first-best consumption in the market Model~SB has the form  
\begin{equation}\label{eq:modelSB-first-best}
C_j^{\circ, SB}(p) = 
\begin{cases}
    \overline{c}_j^{SB}, & p \leq  \underline{p}_{j}^{SB} \\
    (G_j')^{-1}(p), &  \underline{p}_j^{SB}< p < \overline{p}_j^{SB}  \\  
    \underline{c}_j^{SB}, & p \geq \overline{p}_{j}^{SB} 
\end{cases},
\end{equation}
where $\underline{p}_{j}^{SB} = G_j'(  \overline{c}_j^{SB})$, $\overline{p}_{j}^{SB} = G_j'(\underline{c}_j^{SB})$. 
Moreover, $\psi_j^{\circ,SB}(p)=W_j - C_j^{\circ,SB}(p)$. 
\end{theorem}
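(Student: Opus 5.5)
The plan is to repeat the KKT analysis used for the unconstrained first-best problem \eqref{eq:model1-Criteria}--\eqref{eq:model1-constrpsi}, now with the extra trading bounds \eqref{eq:model-SB-Constrpsi}, and the market clearing condition ignored as before. The first step is to show that the budget constraint $\psi_j\le W_j-C_j$ still binds at the optimum. Fix $C_j$. The objective is strictly increasing in $\psi_j$ because $p>0$. Hence for fixed $C_j$ the optimal trade is the largest admissible value $\psi_j=a_jW_j\wedge(W_j-C_j)$, which requires $-b_jW_j\le a_jW_j\wedge(W_j-C_j)$, i.e.\ $C_j\le W_j(1+b_j)$. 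If $C_j<W_j(1-a_j)$, then $\psi_j=a_jW_j<W_j-C_j$. In that case we can increase $C_j$ up to $W_j(1-a_j)$ with $\psi_j$ unchanged; this is feasible because $W_j(1-a_j)\le \overline c_j$ by \eqref{eq:SB1}, and it strictly increases $G_j$ since $G_j$ is increasing. Therefore every optimizer satisfies $C_j\ge W_j(1-a_j)$, and then $\psi_j=W_j-C_j$. Combining this with $C_j\le W_j(1+b_j)$ and the box $[\underline c_j,\overline c_j]$, the problem reduces to the one-dimensional problem
\[
\max_{C_j\in[\underline c_j^{SB},\overline c_j^{SB}]} \big[G_j(C_j)+p(W_j-C_j)\big].
\]
The interval is nonempty by \eqref{eq:SB1}, which gives the claim $\psi_j^{\circ,SB}=W_j-C_j^{\circ,SB}$.

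The second step is to solve the reduced problem exactly as in the derivation of \eqref{eq:model1-cnsmpt-optimal}. The objective $C\mapsto G_j(C)-pC$ is strictly concave and differentiable with derivative $G_j'(C)-p$, and $G_j'$ is strictly decreasing. If $G_j'(\overline c_j^{SB})<p<G_j'(\underline c_j^{SB})$, the unique maximizer is the interior stationary point $(G_j')^{-1}(p)$. If $p\le G_j'(\overline c_j^{SB})=\underline p_j^{SB}$, the derivative is nonnegative on the whole interval, so the maximizer is $\overline c_j^{SB}$. If $p\ge \overline p_j^{SB}$, the derivative is nonpositive, so the maximizer is $\underline c_j^{SB}$. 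This gives \eqref{eq:modelSB-first-best}. One could equivalently write the KKT system with multipliers for the new constraints on $\psi_j$ and check complementary slackness case by case, but the direct reduction is cleaner.

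I expect the main obstacle to be the reduction step. The constraint $\psi_j\le a_jW_j\wedge(W_j-C_j)$ is a minimum of two bounds, so I must argue carefully that the cap $a_jW_j$ cannot be the active bound at an optimum. This is exactly where \eqref{eq:SB1} and the monotonicity of $G_j$ are used. I also need to handle the degenerate case $\underline c_j^{SB}=\overline c_j^{SB}$, where the formula holds trivially, and make sure that the lower trading cap $-b_jW_j$ translates into the upper consumption bound $W_j(1+b_j)$ only through the binding identity $\psi_j=W_j-C_j$.
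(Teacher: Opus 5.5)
Your proposal is correct and follows essentially the paper's route. You fix $C_j$, take $\psi_j=a_jW_j\wedge(W_j-C_j)$ (feasible iff $C_j\le W_j(1+b_j)$), rule out $C_j<W_j(1-a_j)$ by monotonicity of $G_j$, and reduce to the first-best problem on $[\underline c_j^{SB},\overline c_j^{SB}]$.

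The only difference is cosmetic. The paper writes the reduced objective $h_j$ piecewise and argues via its continuity and strict increase on the first piece. You use a direct improvement step instead, which has the small merit of making explicit that \eqref{eq:SB1} is what guarantees $W_j(1-a_j)$ is an admissible consumption level.
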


\begin{proof} 
For a fixed $C_j\in[\underline c_j,\overline c_j]$, we consider the sub-problem in $\psi_j$,
$$
\max_{\psi_j} \ p\,\psi_j \qquad \text{s.t.} \quad -b_jW_j \leq \psi_j \leq a_jW_j \wedge (W_j-C_j).
$$
Since $a_jW_j\ge 0\ge -b_jW_j$, the feasible set of this problem is well defined iff  
\begin{equation}\label{eq:SB2}
a_jW_j\wedge(W_j-C_j) \geq -b_jW_j \iff W_j-C_j \geq -b_jW_j \iff C_j \le W_j(1+b_j). 
\end{equation}
Thus for $ C_j \le W_j(1+b_j)$, since $p>0$, the objective $p\psi_j$ is maximized by taking $\psi_j$ as large as possible in side the feasible set, namely 
\begin{equation}\label{eq:SB2-2}
\psi_j^*(C_j) := a_jW_j \wedge (W_j-C_j). 
\end{equation}
We note that combining \eqref{eq:SB2} with the constraint $C_j\leq\overline c_j$, the set of $C_j$ for which some feasible $\psi_j$ exists is $C_j\in[\underline c_j,\overline c_j^{SB}]$. On this set, problem \eqref{eq:model-BS-Criteria}--\eqref{eq:model-SB-Constrpsi} reduces to
\begin{equation}\label{eq:SB3}
\max_{C_j\in[\underline c_j,\ \overline c_j^{SB}]} \ h_j(C_j),  
\end{equation}
where $h_j(C_j)= G_j(C_j) + p\big(a_jW_j\wedge(W_j-C_j)\big)$. By \eqref{eq:SB2-2}, we note that if $\psi_j^{*}(C_j)=W-ja_j$, then the farmer sold the maximum amount allowed, and thus $C_j< W_j - W_ja_j$. Else, if $C_j \geq W_j - W_ja_j$, then $\psi_j^*(C_j) = W_j-C_j$. Therefore, we write  $h_j$ as follows
$$
h_j(C_j) = 
\begin{cases}
G_j(C_j) + p\, a_jW_j, & C_j < W_j(1-a_j),\\
G_j(C_j) + p\,(W_j-C_j), & C_j \ge W_j(1-a_j).
\end{cases}
$$
The right and left limit of $h_j$ at $C_j = W_j(1-a_j)$ are the same, and thus $h_j$ is continuous. Recall that $G_j$ is strictly increasing, and thus $h_j(C_J)$ is strictly increasing on $[\underline{c}_j, W_j(1-a_j)]$. This implies that the maximum of $h_j$  over $[\underline c_j,\overline c_j^{SB}]$ is never attained at a point strictly inside $\big[\underline c_j, W_j(1-a_j)\big)$. Consequently, the maximization problem \eqref{eq:SB3} is equivalent to
$$
\max_{C_j \in [\underline c_j^{SB},\ \overline c_j^{SB}]}\ G_j(C_j) + p\,(W_j-C_j).  
$$
which is exactly the first-best problem \eqref{eq:model1-Criteria}--\eqref{eq:model1-constrpsi} with $[\underline c_j,\overline c_j]$ replaced by $[\underline c_j^{SB},\overline c_j^{SB}]$, and the assertion follows at once. 
\end{proof}

Next, we study how the Pareto optimal price changes due to restrictions on sales or purchases.  
\begin{theorem}\label{th:market-SB-Pareto-change} 
Assume that $\sum_{i=1}^J \underline{c}_i < \sum_{i=1}^J W_i< \sum_{i=1}^J \overline{c}_i$. Then: 
\begin{enumerate}[(i)]
    \item if there are only restrictions on sales, namely all $b_j$ are large enough, it holds true that $p^{a,S} \geq  p^{a}$;
    \item if there are only restrictions on purchases, namely all $a_j=1$, it holds true that $p^{a,B} \leq p^{a}$. 
\end{enumerate}
\end{theorem}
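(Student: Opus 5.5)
Throughout, the Pareto price in each model is a zero of the corresponding excess-demand function. By Theorem~\ref{th:market-SB}, for $k\in\set{S,B}$ we set
\[
Z^{k}(p) := \sum_{j=1}^J C_j^{\circ,k}(p) - \sum_{j=1}^J W_j ,
\]
where $C_j^{\circ,k}$ is given by \eqref{eq:modelSB-first-best} with the effective bounds $\underline{c}_j^{k},\overline{c}_j^{k}$. Comparing Pareto prices then reduces to a pointwise comparison of $Z^k$ with $Z$ from the proof of Theorem~\ref{th:Pareto-price}, combined with monotonicity.

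\textbf{Step 1 (clipping).} Since $G_j'$ is decreasing, formula \eqref{eq:modelSB-first-best} can be rewritten as
\[
C_j^{\circ,SB}(p) = \big(C_j^{\circ}(p)\vee \underline{c}_j^{SB}\big)\wedge \overline{c}_j^{SB},
\]
that is, as the projection of the unconstrained first-best $C_j^\circ(p)$ from \eqref{eq:model1-cnsmpt-optimal} onto $[\underline{c}_j^{SB},\overline{c}_j^{SB}]$. This follows by comparing the thresholds: $(G_j')^{-1}(p)$ clipped to $[\underline c_j,\overline c_j]$ and then clipped to the sub-interval gives the same result. Consequently each $C_j^{\circ,k}$ is continuous and nonincreasing in $p$, and so is $Z^k$.

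\textbf{Step 2 (pointwise comparison).} In case (i), $b_j$ is large, so $\overline{c}_j^{S}=\overline{c}_j$, while $\underline{c}_j^{S}=\underline{c}_j\vee W_j(1-a_j)\ge \underline{c}_j$. Clipping only from below gives $C_j^{\circ,S}(p)\ge C_j^\circ(p)$ for every $p$, hence $Z^S\ge Z$. In case (ii), $a_j=1$ gives $\underline{c}_j^{B}=\underline{c}_j$ and $\overline{c}_j^{B}\le\overline{c}_j$, so $C_j^{\circ,B}\le C_j^\circ$ and $Z^B\le Z$.

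\textbf{Step 3 (comparison of zeros).} For (i), $Z(p^a)=0$ gives $Z^S(p^a)\ge 0$. We need existence of a zero of $Z^S$. At large $p$ we have $Z^S(p)=\sum_j \underline{c}_j^S-\sum_j W_j\le 0$, because $\underline{c}_j^S\le W_j$: indeed $W_j(1-a_j)\le W_j$, and we use $\underline{c}_j\le W_j$, an assumption already made in Section~\ref{sec:prorata}. By continuity and monotonicity the zero set of $Z^S$ is a nonempty closed interval contained in $[p^a,\infty)$, so any choice of $p^{a,S}$ satisfies $p^{a,S}\ge p^a$. If the zero set is not a singleton, we compare its smallest elements, which lie in $[p^a,\infty)$ anyway. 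Case (ii) is symmetric: $Z^B(p^a)\le 0$, and for small $p$ we have $Z^B=\sum_j\overline{c}_j^B-\sum_j W_j\ge0$ by $W_j\le \overline c_j$ (postulated earlier) together with $W_j(1+b_j)\ge W_j$. Hence every zero of $Z^B$ lies in $(0,p^a]$.

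\textbf{Main obstacle.} The delicate point is guaranteeing that $p^{a,k}$ exists and is well defined, since the constrained bounds may make $Z^k$ vanish on an interval, or fail to change sign strictly. I would handle this by defining $p^{a,k}$ as any zero (or the minimal zero) and invoking only the weak inequalities $\underline{c}_j\le W_j\le\overline c_j$, as above. I would also check that the projection identity in Step 1 holds in the edge cases $W_j(1-a_j)\ge\overline c_j$ and $W_j(1+b_j)\le \underline{c}_j$, which are excluded by \eqref{eq:SB1}.
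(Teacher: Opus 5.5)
Your proof is correct and is essentially the paper's argument. The paper also uses $\underline{c}_j^{SB}\ge\underline{c}_j$ to get $\psi_j^{\circ,S}\le\psi_j^{\circ}$ (your $Z^S\ge Z$), evaluates at $p^a$, and uses monotonicity and continuity of the aggregate to place $p^{a,S}$ to the right of $p^a$, with (ii) symmetric. Your clipping identity and existence check make explicit what the paper leaves implicit, with two caveats.

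First, if $Z$ vanishes on an interval, your claims about \emph{every} zero need a fixed convention, e.g.\ minimal zeros. Under that convention (ii) then follows from the intermediate value theorem on $(0,p^a]$.

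Second, $\underline{c}_j\le W_j$ is not among the theorem's hypotheses. Some condition of this kind is nonetheless needed, since the aggregate assumption alone does not guarantee that $p^{a,S}$ exists.
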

\begin{proof} We focus on Model S with restrictions only on sales, and  assuming $b_j$ are all large enough. We recall that for the original, unconstrained model, $\psi^{\circ}$ is given by \eqref{eq:psi-circ}, while for Model~S, in view of Theorem~\ref{th:market-SB}, we have 
\begin{align}
\psi_j^{\circ,S}(p) = 
\begin{cases}
    W_j -  \overline{c}_j, & p \leq \underline{p}_j \\
    W_j-(G_j')^{-1}(p), & \underline{p}_j < p < \overline{p}_j^{SB}\\
    W_j - \underline{c}_j^{SB} & p \geq  \overline{p}_{j}^{SB}.
\end{cases}
\end{align}
Since $\underline{c}_j^{SB}\geq \underline{c}_j$, then $\psi_j^{\circ,S} \leq \psi_j^\circ$, and thus $\sum_j \psi_j^{\circ,S}(p) \leq \sum_j\psi_j^{\circ}(p)$. Moreover, both $\sum_j\psi_j^{\circ}(p)$ and $\sum_j \psi_j^{\circ,S}(p)$ are continuous and increasing functions of $p$. By Theorem~\ref{th:Pareto-price}, the Pareto prices are prices at which the market clears, namely satisfying \eqref{eq:model1-markt-clear}, \eqref{eq:model-BS-markt-clear}, or equivalently $p^a$ is the  zero of $\sum_i \psi_i^{\circ}(p)$, and $p^{a,S}$ is the zero of $\sum_i \psi_i^{\circ,S}(p)$.  Overall, we deduce that $\sum_j \psi_j^{\circ,S}(p^{a})\leq0$, which implies that $p^{a} \leq p^{a,S}$. 

Part (ii) is proved similarly, first noting that $\psi_j^{\circ,B} \geq \psi_j^\circ$. The proof is complete. 
\end{proof}

Besides outright caps on amount of water to buy/sell, several other restrictions have been proposed or considered, including i) spatially-dependent caps; ii) caps linked to farm size. For the first case, some areas of the basin may be designated as too sensitive for trading (for example due to environmental effects such as stream depletion) and hence are given additional restrictions compared to other, less sensitive areas. On a related note, there may be restrictions of moving rights from one sub-basin to another, e.g.~a cap of the form $\sum_{j \in E} \psi_j \le \bar{\Psi}$ for a subset of farmers $E$.  For example, stakeholders in the Goulburn-Murray basin in Victoria, Australia are restricted to transfer rights from deep confined aquifers (Zone 2 in Upper Ovens Catchment) to unconfined alluvial fan aquifer in Zone 1 \cite{VictoriaWaterRegister}. We leave such cases for future research. For case ii), several advocacy groups specifically target small farms, with some proposal including preferential/different trading rules for farmers with low $W_j$. The justification is that small farms bear the brunt of labor and community impact of agricultural shifts (e.g., potential land fallowing) and hence need separate protections compared to large landowners.

\subsection{Capped Sales of Water}\label{sec:Restricted-sales}

Restricting the volume  of water any stakeholder can sell is practiced in some markets in order to ensure that land is not completely fallowed and that agricultural use is maintained. For example, the rules of the Edwards Aquifer Authority (EAA) in Texas stipulate that  ``Owners of qualified irrigated land in this region were allocated two acre-feet of water per acre by the Edwards Aquifer Authority, based on proven historical use. The first or “top” acre-foot can be leased or sold for any end use. The second or “bottom” acre-foot must remain with the land in agricultural use.'' \cite{edwards2004}
This implies that $a_j = 0.5$ since $W_j = 2$ and $W_j-\psi_j > 1$. The Fox Canyon GSA in California restricts all sales to be at most 100\% of the annual allocation $W_j$ which is a meaningful cap in the context of multi-period markets with carryover provisions. Fox Canyon furthermore restricts all purchases (but still allows sales) by farmers located in one of the designated Special Management Areas \cite{heard2019sgma}, in order to make sure the pumping volume in ecologically fragile coastal zones can not increase beyond the allocation. 

\begin{figure}[!htb]
    \centering
    \includegraphics[width=0.99\linewidth]{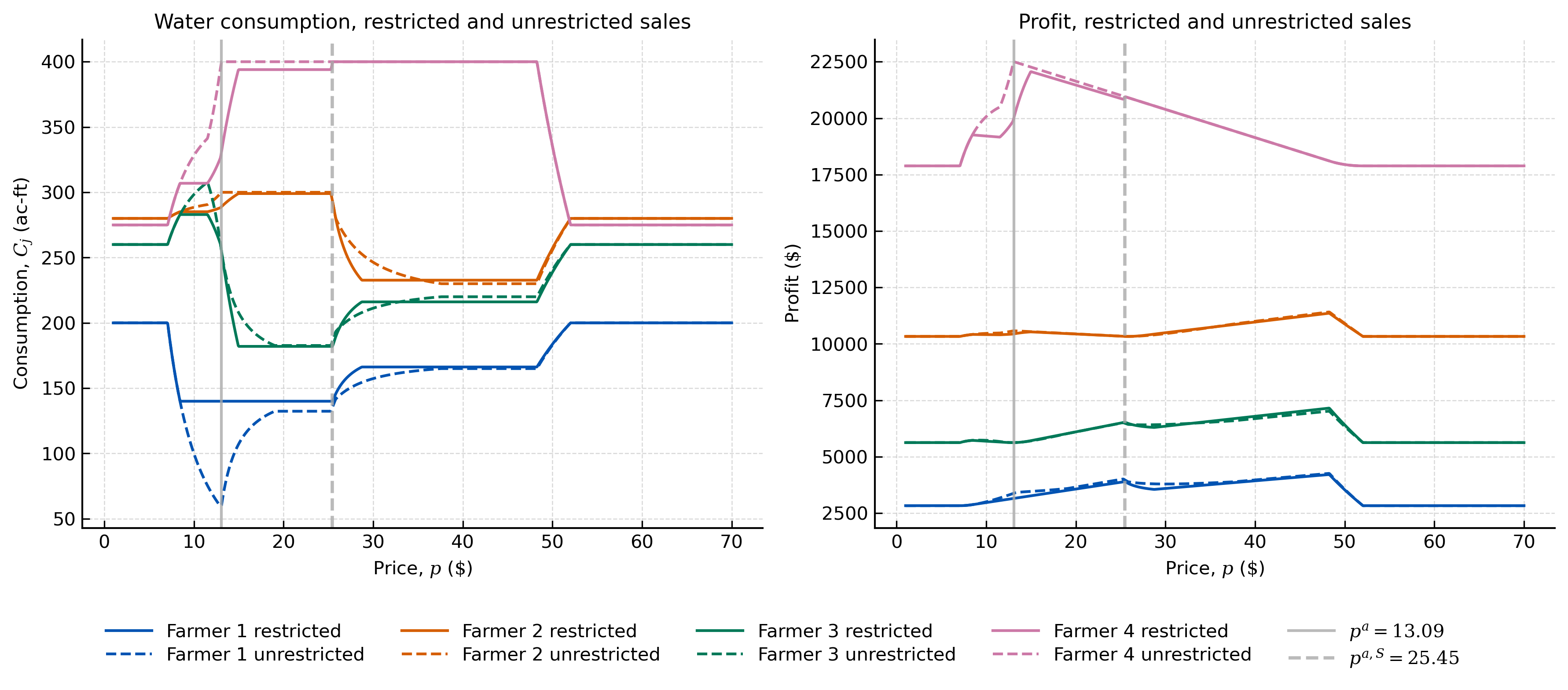}
\caption{Water consumption $C_j$ (left panel), Profits $V_j$ (right panel), and the Pareto prices $p^a$ for market models with restricted sales (solid lines) and unrestricted sales (dashed line).}
\label{fig:rest_sales}
\end{figure}

Figure \ref{fig:rest_sales} illustrates the impact of such restricted sales market. We re-use the case study and model parameters from Section~\ref{sec:illustrative_example}, imposing a cap of any farmer able to sell at most 20\% of their rights, i.e.~$a_j=0.2$ for all $j$. We show the consumption and profit $V_j(p)$ of each farmer as a function of $p$, along with their Pareto optimal prices $p^{\circ,k}$ (dashed vertical lines). The Pareto price in the unrestricted market $p^{a}=13.09$ is lower than the price in the restricted market $p^{a, S}=25.45$, illustrating Theorem \ref{th:market-SB-Pareto-change}.

For the primary seller (Farmer 1, blue), the 20\% cap on sales implies that restricted consumption increases compared to the unrestricted baseline,  $C^S \ge C$. This result stems from the pro-rata redistribution mechanism: any surplus exceeding the sales limit is added to the collective unsold total and redistributed in proportion to each agent's contribution. Given that selling is more profitable than consumption, the constraints also reduce the profits of Farmer 1,  as confirmed by the right panel of Figure \ref{fig:rest_sales}. The pink Farmer 4 never sells, hence the cap reduces their consumption and profits. However, for the intermediate Farmers 2 and 3, the cap affects them ambiguously, sometimes increasing and other times decreasing their consumption, with a positive or negative impact on profits.

In Figure \ref{fig:salesconstr},  we show the aggregate consumption $\sum_j C^{\circ,k}_j$ for both markets. The black horizontal line corresponds to the total water rights $\sum_jW_j=1015$.
Aggregate consumption in the restricted market is greater than or equal to that of the unrestricted market as the sales cap forces farmers to consume a minimum of $W_j(1-a_j)$, effectively creating a floor on demand even when the unrestricted first-best consumption would be lower.

\begin{figure}[htb!]
    \centering    \includegraphics[width=0.6\linewidth]{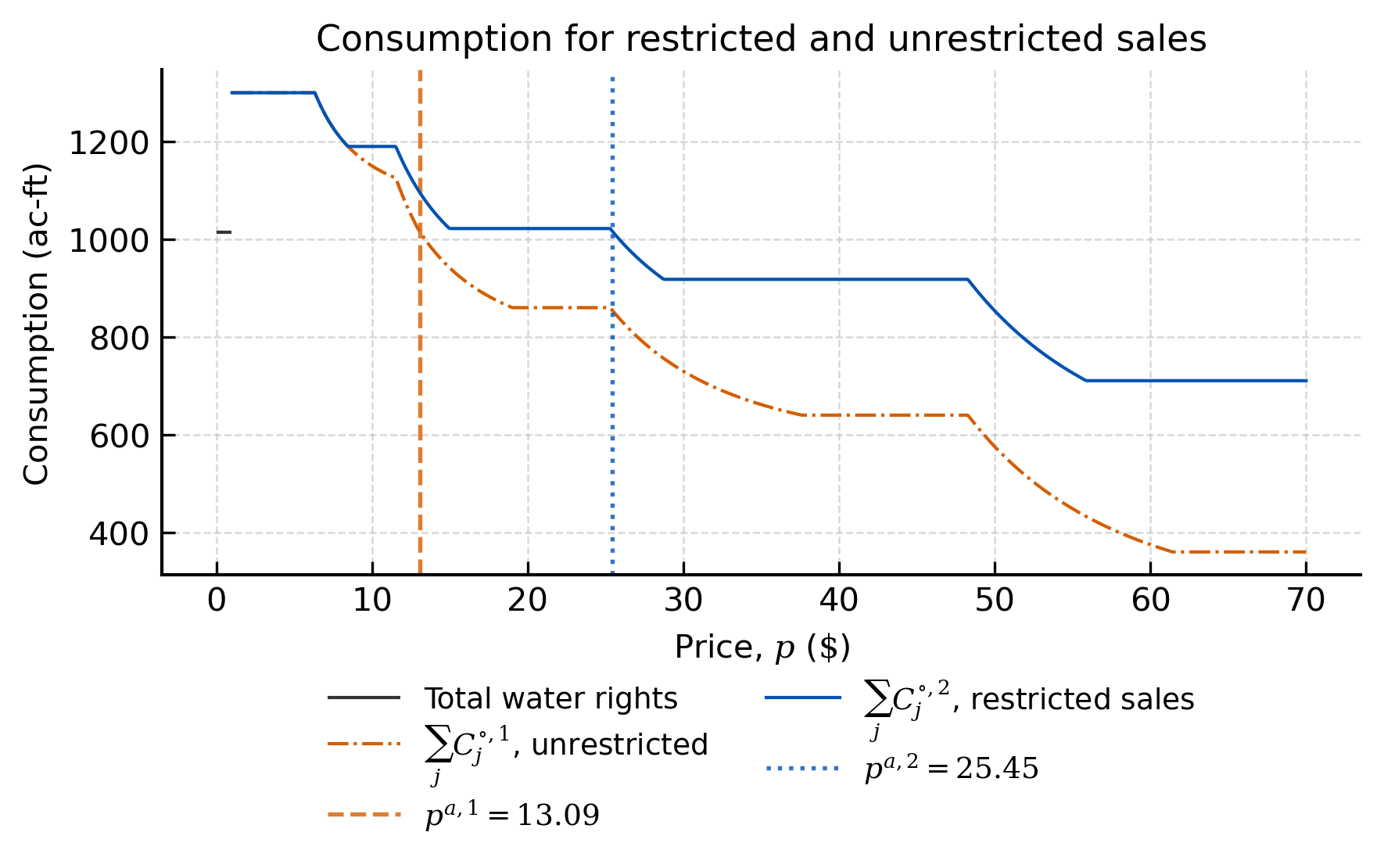}
    \caption{Aggregate optimal water consumption  and Pareto prices for unrestricted sales (orange lines) and restricted sales (blue lines) with restriction trading proportion$a_j=0.2$. Solid black line corresponds to the total water rights, $\mathfrak{W} = \$1015$. 
    }
    \label{fig:salesconstr}
\end{figure}

\subsection{Joint Effect of Restrictions}\label{sec:restrct-both}

To illustrate the interaction and the changing impact of different constraints, we next compare the following 3 situations to the baseline: (S) capped sales $a_j = 0.2$; (B) capped purchases $b_j = 0.2$; (SB) combined cap on both selling and buying rights, $a_j = b_j = 0.2$. These cases cover all the settings of Theorem \ref{th:market-SB} and are shown in Figure \ref{fig:All-restricted-farmer1}. The first case of capped sales (dotted red line) is exactly the case study from the previous section. When agents are prohibited from buying more than 20\% of their initial water rights (dashed blue), buyer consumption in the restricted market is bounded above by unrestricted consumption levels, leading to a corresponding reduction in profit. On the other hand, sellers experience involuntary increases in consumption due to the cap on market demand, realizing profits that are at most equal to their unrestricted counterparts. These dynamics are detailed in Figure \ref{fig:All-restricted-farmer1} for Farmer 1. Similar discussion holds for the bi-restricted market, see the green curve in Figure \ref{fig:All-restricted-farmer1}. In contrast to the previous two cases, $\sum_j \psi_j^{\circ,SB}(p) $ could be either bigger or smaller than $\sum_j \psi_j^{\circ}(p)$, which implies that the Pareto price $p^{a,SB}$ cannot be compared to $p^{\circ}$. 
Note that at intermediate prices, any constraint lowers the profits of Farmer 1. As expected, the dual restriction on both sales and purchases leads to largest loss compared to baseline, and in particular supersedes losses from only purchase caps or only sale caps. However, the losses between the latter two are non-comparable. Moreover, when prices are very low ($p<10$) or very high $(p> 50)$ the constraints do not bind.

\begin{figure}[!htb]
    \centering
    \includegraphics[width=0.99\linewidth]{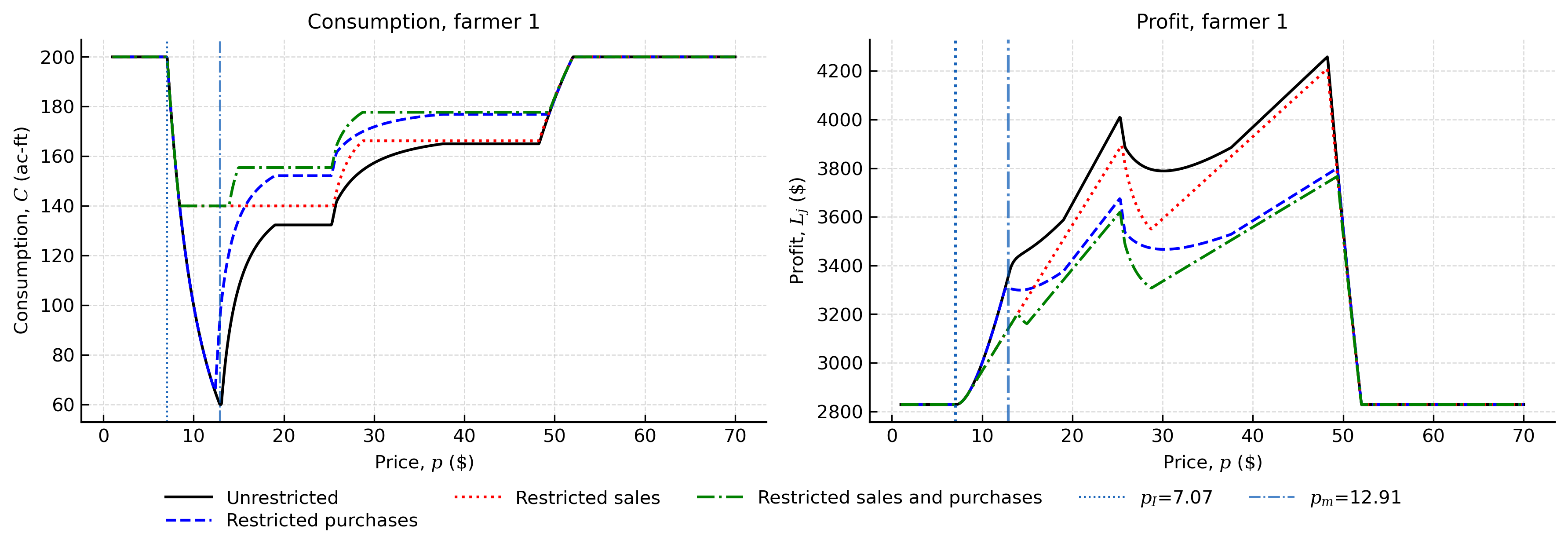}
    \caption{Impact of trading restrictions on Farmer 1 across the 4 cases $k \in\{S,B,SB\}$ in Section \ref{sec:restrct-both}. Consumption $C^k_1(p)$ (\textit{left panel}) and Profit $V^k_1(p)$ (\textit{right panel}) as a function of price $p$.}
    \label{fig:All-restricted-farmer1}
\end{figure}

Figure \ref{fig:price-a-b} shows how the Pareto clearing prices are driven by the joint choice of trading caps $a_j, b_j$. Recall that $a_j =1, b_j \gg 0$ corresponds to the unconstrained baseline. Moreover, per Theorem \ref{th:market-SB-Pareto-change}, prices increase in $a_j$ and decrease in $b_j$, so highest prices are when $a_j$ is large and $b_j$ is small (strict purchase caps) and lowest prices occur when $a_j$ is small and $b_j$ is large (strict sale caps).  By jointly tuning the sale and purchase caps $(a_j, b_j)$, the market-maker can effectively steer the equilibrium price to any target level, offering an indirect but powerful lever for price formation, assuming that the market trades at the Pareto price with imbalance settled through a pro-rata mechanism. The heatmap in particular makes clear that the effect is highly non-linear: tight caps on either side of the market lock in a small range of high Pareto prices, while as expected the price sensitivity to the caps flattens out once both $a_j, b_j$ exceed roughly $0.3$–$0.4$, closely matching the unconstrained Pareto price. 

\begin{figure}[!htb]
    \centering
    \includegraphics[width=0.6\linewidth]{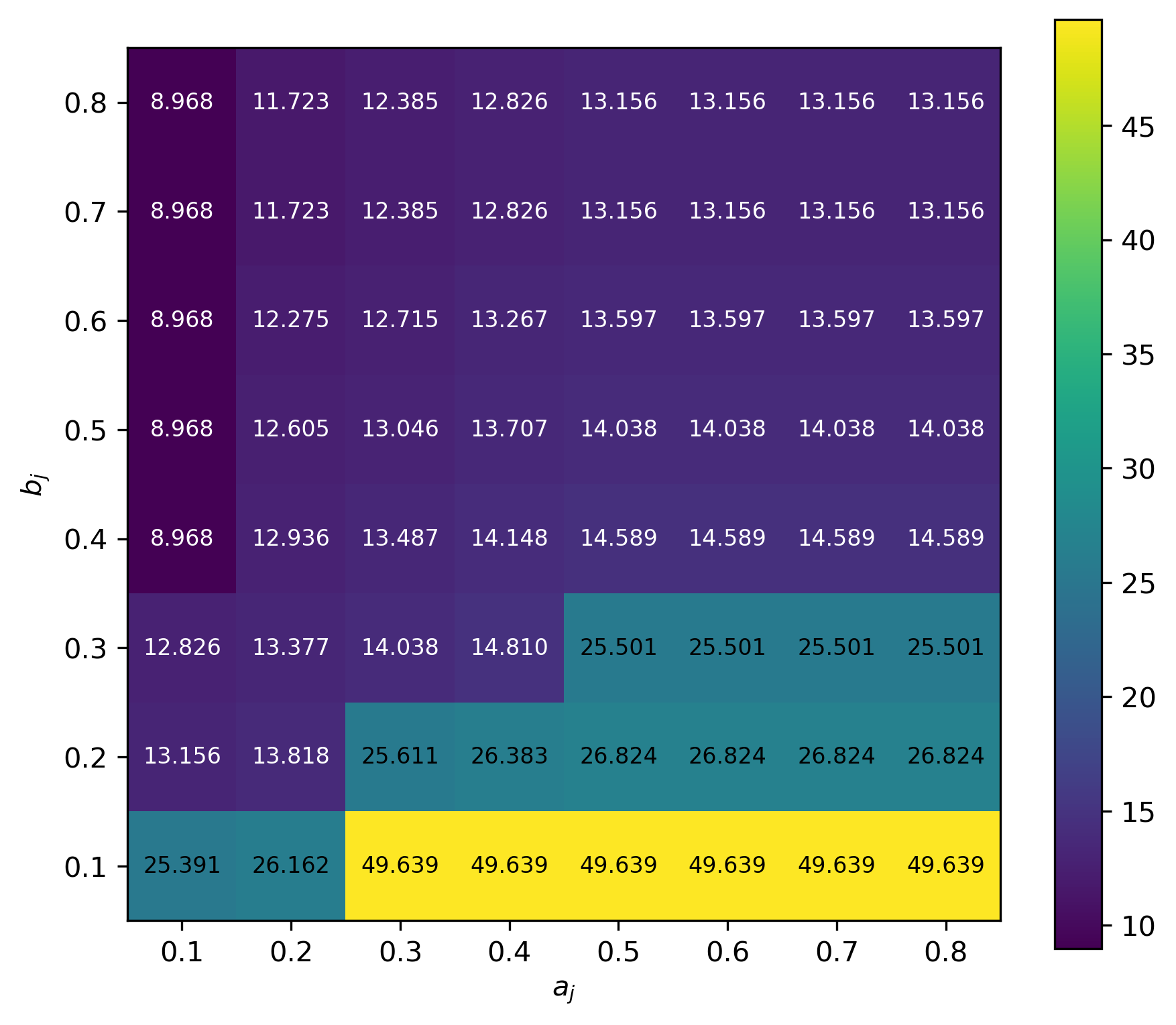}
    \caption{Pareto prices for different values of trading thresholds $a_j$ (max fraction of water sold) and $b_j$ (maximum fraction of water bought). The unconstrained case is in the upper right corner. }
    \label{fig:price-a-b}
\end{figure}

\section{Equilibrium under centralized price formation}\label{sec:Centralized-price}

In the base Pareto setup, the market-maker or state authority  acts like an ``invisible hand'', effectively optimizing the aggregate profits of the market participants (which we have shown is equivalent to maximizing trading volume), and does not have criteria of their own. In the more realistic setting, the market maker is a regulator who has additional objectives of their own, for example related to environmental protection or other externality not fully priced in by the farmers. 
Such goals can be embedded in the original problem by assuming that the market-maker picks the price $p$ either exogenously, or through an optimization criterion. The market-maker precommits to using the pro-rata mechanism so the agents know in advance how the market imbalance will be settled. As we argue below, this two stage procedure can be viewed as a leader-follower game.

\subsection{Targeting traded water volume}\label{sec:target_water}

Rather than imposing restrictions on individual traded volumes (see Section~\ref{sec:constrains}), a regulator may choose to limit the \textit{total volume of water traded} in the groundwater market. Such aggregate constraints can be motivated by the desire to stabilize the market, or to control  resource sustainability, particularly during periods of severe or uncertain droughts. An excessive amount of traded water during a short period of time may lead to abrupt shifts in groundwater use, increased pressure on vulnerable aquifers, or increase  price volatility. Capping the aggregate traded volume allows the regulator to  moderate collective responses, reduce systemic risk, and promote a more orderly adjustment of allocations over time leading to long-term market resilience.

We recall that the total amount of traded water $\sum_{j\in\sS} \psi_j(p)$, increases for $p \in(\widetilde{p}_l,p^a)$, and decreases for $p>p^a$, with maximum volume of $144.99$ ac-ft traded at the Pareto price $p^a$. Thus, the market-maker, instead of imposing an additional market constraint for all participants, can directly choose a price $p(A)$ that matches the desired amount of traded water $A$ (see Figure~\ref{fig:water_sold}, left panel) by solving 
\begin{equation}\label{eq:water_traded_rest}
    \sum_{j\in\sS} \psi_j(p) = A, \quad p\in[\widetilde{p}_l, p^a].  
\end{equation}
Clearly, for every $A$ strictly smaller than the maximum amount of water traded, there exist at least one price $p^m(A)<p^a$ and at least one price $p^M(A)>p^a$ satisfying \eqref{eq:water_traded_rest}. Consequently, the market maker, acting as leader, must also impose a restriction on the traded price relative to the Pareto price $p^a$. Choosing a smaller price---effectively $p^m(A)$---favors sellers of water, i.e., less efficient agents, whereas trading at $p^M(A)$ favors agents that use water more efficiently, as measured by profit per ac-ft of water used. While choosing $p^M(A)$ makes more sense from a competitive standpoint, regulators typically would opt for smaller price $p^m(A)$ instead, for two reasons. First, farmers tend to be wary of trading altogether, and lower prices help alleviate this reluctance; second, less efficient farmers are usually also the smaller ones, and favoring them is often consistent with the regulator's distributional objectives.

\begin{figure}[h!]
    \centering
    \begin{subfigure}[b]{0.49\linewidth}
        \includegraphics[width=.95\linewidth]{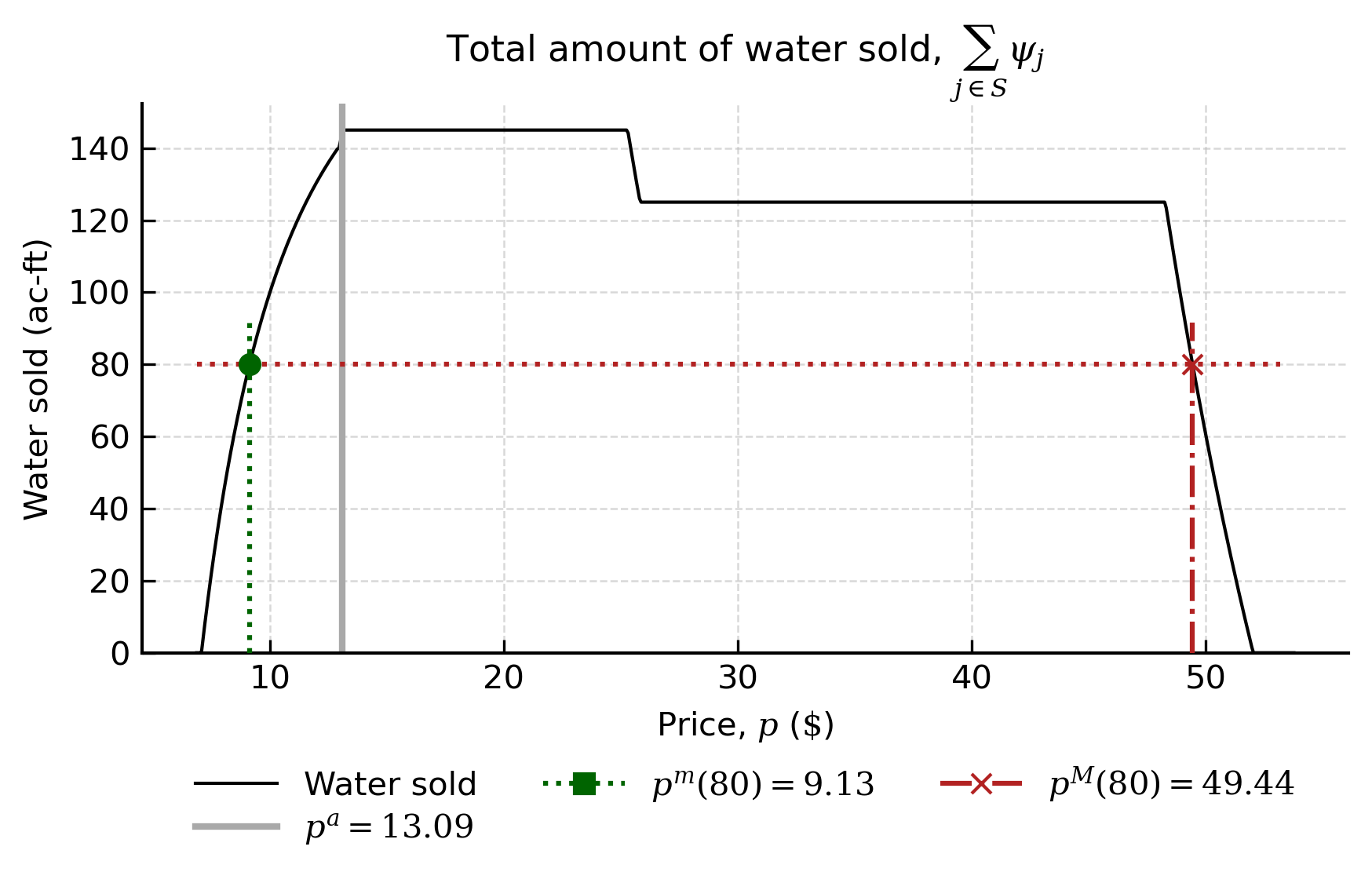}
    \end{subfigure}
    \hfill
    \begin{subfigure}[b]{0.49\linewidth}
        \centering
        \includegraphics[width=0.95\linewidth]{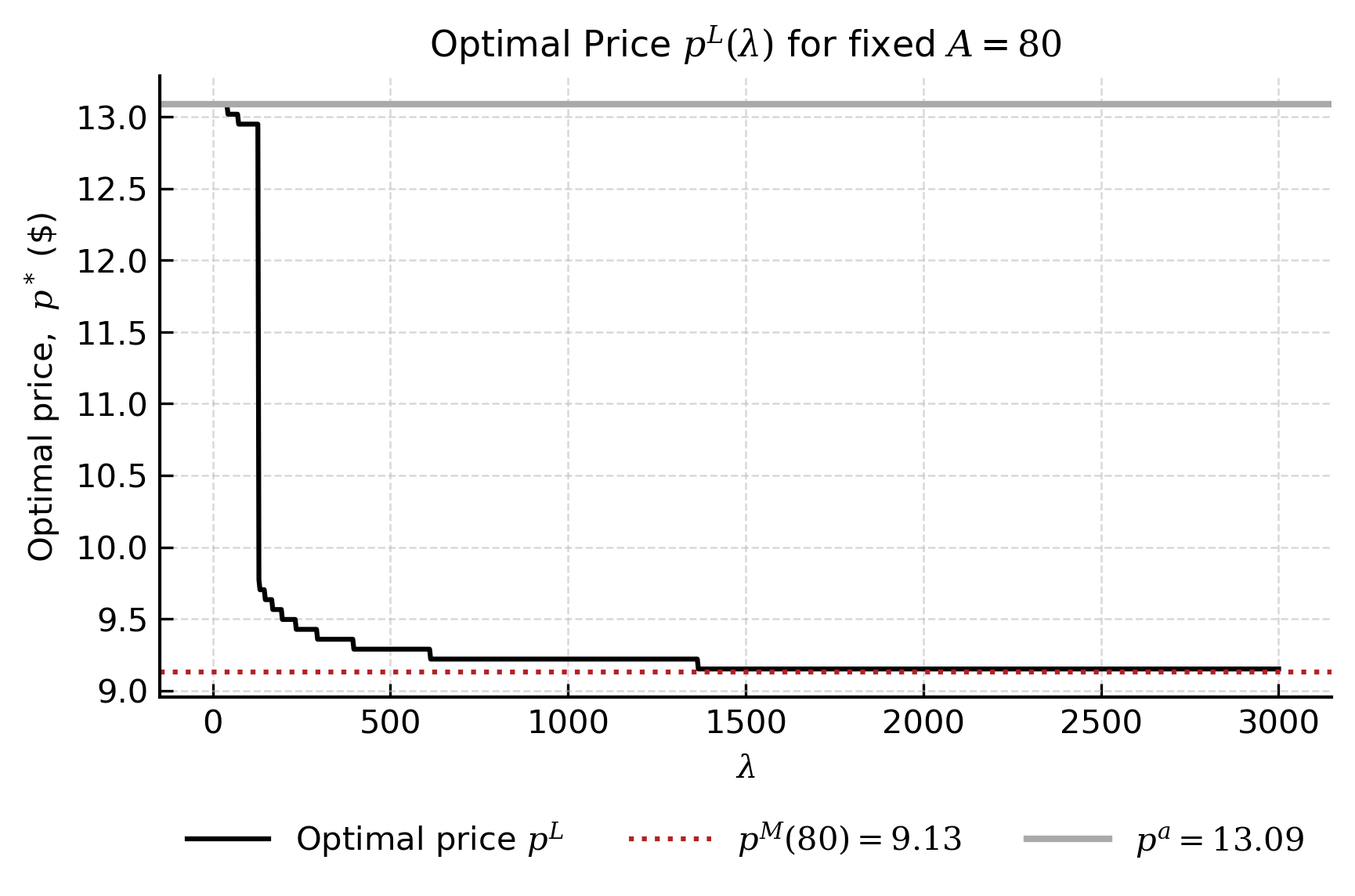}
    \end{subfigure}
    \caption{{Left: Total volume of water sold as a function of price $p$. Right: Optimal price $p^L$ in \eqref{eq:MM-objective} as function of penalty parameter $\lambda$. Gray lines correspond to Pareto price $p^a=13.09$, dotted red lines show $p^M(80)=9.13$ for target traded volume $A=80$.}}
        \label{fig:water_sold}
\end{figure}

We note that the uniqueness of $p^m$ or $p^M$ is not guaranteed and depends on the model parameters. In our example, for instance, the amount of water sold exhibits two flat regions for $p\geq p^a$, as shown in Figure~\ref{fig:water_sold} (left panel). This can be understood by inspecting Figure~\ref{fig:pro-rata1} (left): for prices immediately above $p^a$, Farmers~2 (orange) and~4 (pink) are buyers, both reaching their upper consumption bound $\overline{c}_j$, while Farmers~1 (blue) and~3 (green) are sellers, jointly supplying the entire demand, which itself remains constant. Under pro-rata allocation, the only change for Farmers~1 and~3 is the split of this fixed total between them.

Alternatively, the market-maker could act as a social planner who maximizes the social welfare and controls  the total amount traded through a soft penalty by considering the following objective  
\begin{equation} \label{eq:MM-objective}
\Phi^L(p, \lambda, A) = \sum_{j=1}^J \big( G_j(C_j(p)) + \psi_j(p) p \big)
+ \lambda \left( 1 - \frac{\sum_{j \in \sS} \psi_j(p)}{2A} \right) \sum_{j \in \sS} \psi_j(p),
\end{equation}
where $C_j(p), \psi_j(p)$ are the realized optimal consumptions and traded amount by agent $j$, under pro-rata mechanism, and $\lambda, A>0$ are some  fixed parameters.  The parameter $A$ as before should be viewed as the target volume of  water sold, and $\lambda$ as a strength or scaling parameter. As discussed above, we restrict prices to $p\leq p^a$, and the market-maker, or leader, solves the following problem 
\begin{equation}\label{eq:MM-problem}
p^L = \argmax_{p\leq p^a} \Phi^L(p, \lambda, A). 
\end{equation}
In turn, the farmers, or followers, given the optimal $p^L$ will solve for a Nash equilibrium with pro-rata allocation scheme.  
In view of Proposition~\ref{prop:SO-Pareto}, the first term in the objective \eqref{eq:MM-objective} achieves its maximum at the Pareto price $p^a$, at which we also have the largest amount of water sold $\sum_{j\in\cS} \psi_j^\circ(p^a)$. The second term achieves its maximum when $\sum_{j\in\sS} \psi_j(p)=A$. Thus, the market-maker, through appropriate choice of the parameter $\lambda$ (see Figure~\ref{fig:water_sold}, right panel), chooses the traded price by balancing between social welfare and target amount. For fixed $A$ and sufficiently large $\lambda$, as expected, the leader's chosen optimum price $p^L$ approaches $p^m(A)$. It should be noted that due to the particular choice of the penalty term in \eqref{eq:MM-objective}, for smaller $\lambda$ there is a regime shift with the optimum $p^L$ jumping to social optimum $p^a$ governed by the first term in the criterion. We refer to Appendix~\ref{sec:appendix_MM} for a detailed discussion of this effect.

\subsection{Optimizing the disparity in water-use profitability}\label{sec:disparity}
In a market with heterogeneous agents, farmers differ in their production technologies, water requirements, and outside options, and thus a single uniform price $p$ may affect them differently. As discussed earlier, some agents will find the price highly favorable relative to their marginal returns on water used, while others will not. This asymmetry makes it natural to seek a price that minimizes the resulting `disparity' in marginal profits across farmers. We start by defining the profit per unit of water used, or \textit{efficiency ratio}, as 
\begin{equation*}
\rho_j(p)=
    \begin{cases}
        \frac{V_j(p)}{W_j(p)+|\psi_j(p)|}, &  j \in \sB \\
        \frac{V_j(p)}{C_j(p)+\psi_j(p)}, &  j \in \sS
    \end{cases}
\end{equation*}
displayed for our numerical example in Figure~\ref{fig:gini} (left). As expected, Farmer 1 has lowest index, while Farmer~4 is the most efficient. 

A canonical way to measure market fairness, or disparity, is to use the Gini index, defined as 
\begin{align}\label{eq:gini}
G(p) = \frac{\sum_{i,j=1}^J |\rho_i(p) - \rho_j(p)|}{2(J-1) \sum_{i=1}^J \rho_i(p)}.  
\end{align}
The extreme values are defined correspondingly as $p^g = \argmin_{p} G(p)$, $p^G = \argmax_{p} G(p)$. In Figure~\ref{fig:gini} (right), we present the Gini coefficient as function of groundwater price $p$. The largest disparity is achieved at the threshold trading prices $\widetilde{p}_l,\widetilde{p}_u$; for prices smaller and larger than that there is no trading, and the Gini index  is the largest. The smallest Gini index is achieved at $p^g=48.24$, which is significantly different from the Pareto price $p^a=13.09$. For this particular set of model parameters, it happens that Farmer~4 has lowest efficiency coefficient, while all other agents have the largest one.  Comparing these two extremes offers a useful bracketing of outcomes: rather than looking only at a single equilibrium or efficiency-maximizing price, we can characterize the range of distributional outcomes that a price mechanism can plausibly produce, and assess how far a given policy price sits from either extreme.

\begin{figure}[h!]
    \centering
   \begin{subfigure}[b]{0.49\linewidth}
        \centering
        \includegraphics[width=\linewidth]{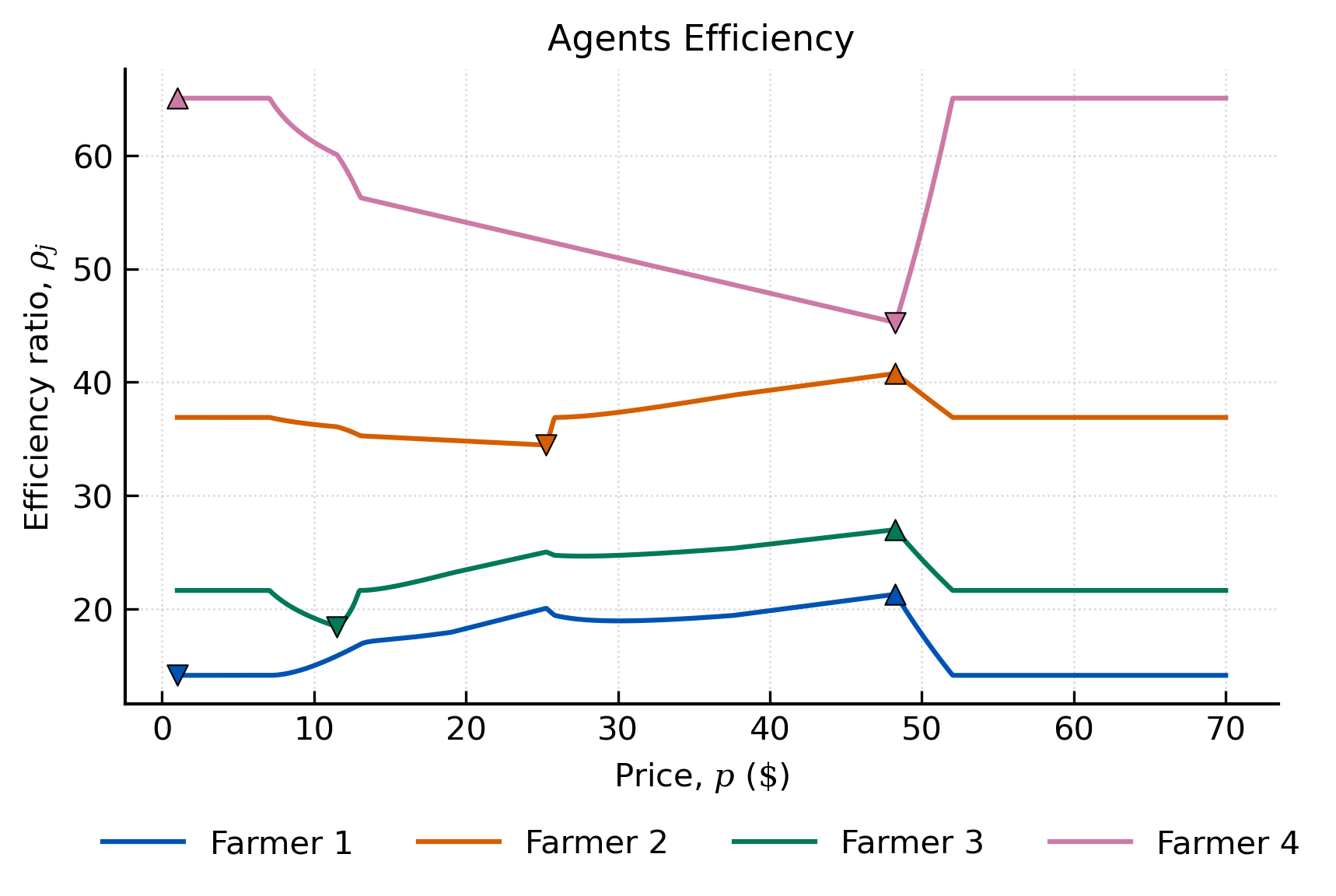}
    \end{subfigure}
    \hfill
    \begin{subfigure}[b]{0.49\linewidth}
        \centering
        \includegraphics[width=\linewidth]{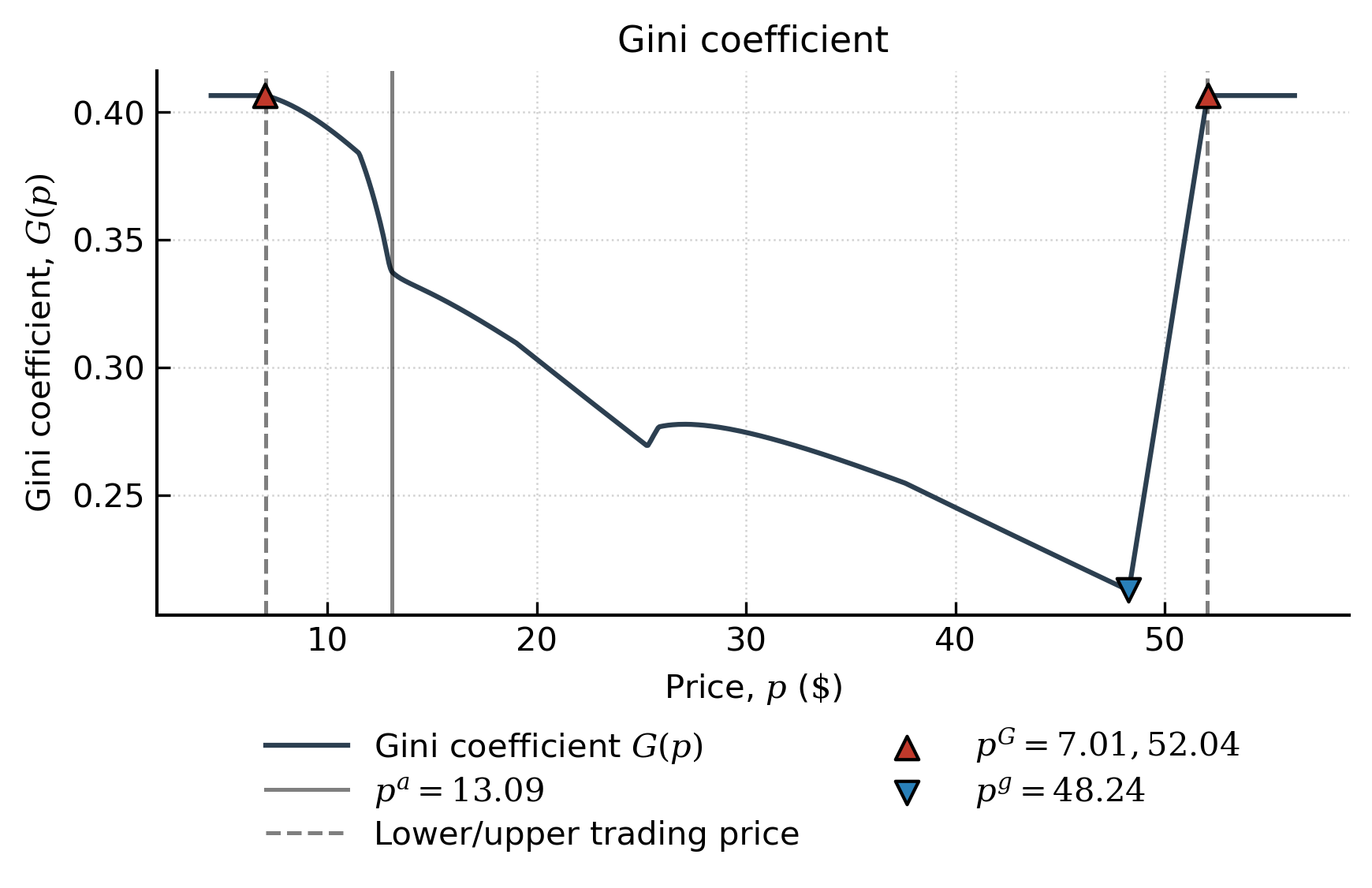}
    \end{subfigure}
    \caption{Left: Agents' efficiency coefficient (or profit per unit of water used) $\rho_j(p)$ as a function of $p$ for each $j=1,\ldots, 4$. Markers placed at largest and smallest value of each $\rho_j$. Right:  Gini coefficient $G(p)$. Dashed vertical lines correspond to traded price bounds $[\underline{p}, \bar{p}]$.}
    \label{fig:gini}
\end{figure}

Alternatively, the market-maker can measure disparity as cumulative pairwise (squared) distance between agents efficiency, defined as 
\begin{align}\label{eq:D-disparity}
D(p)=\sum_{i,j} |\rho_i(p)-\rho_j(p)|^2,
\end{align}
which can be equivalently written in terms of statistical variance of $\rho$, as $D(p) = 2J^2\Var(\rho(p))$. This measure yields similar results\footnote{Generally speaking, the extreme values of the Gini index $G(p)$ and disparity measure $D(p)$ do not coincide, and depend on particular model primitives.}
to the Gini index, and we omit discussing them here.

\subsection{Alternative Pro-rata Rules}

Beyond directly setting the market price of traded groundwater rights, the regulator has the complementary lever of prescribing the priority rule for prorating the imbalance. Rather than using the baseline pro-rata mechanism  under which all farmers are treated equally according to their buying/selling volumes, in this section we discuss  alternative schemes. 

Rather than resolving trading imbalance proportionally, there is a variety of asymmetric allocation schemes that preferentially put some stakeholders ahead of others. We refer to \cite{moulin2000priority,gomez2020agricultural} for an overview of the use of such schemes in environmental economics and water management more broadly. We note that in our setup, we must account for the original rights $W_j$ granted to each agent, as well as their minimum and maximum consumption bounds. 

The simplest asymmetric scheme is \emph{priority} based, whereby
agents are classified into priority classes and imbalance is resolved sequentially by class. That is, the needs of the agents with the highest priority are met first and, once fully satisfied, the remaining groundwater trades are allocated to the following agents according to a decreasing priority order criterion. If there are multiple stakeholders within a priority class, then those can be allocated by pro-rata, or by a further intra-class priority order. 

As an example, the Prior Appropriation doctrine ranks stakeholders chronologically according to the ``first in time first in right'' (also known as fully sequential allocation) and is often used for surface water rationing.

\paragraph{Prior Appropriation Seniority.} For groundwater rights, one common sorting is based on the farmers' legal water rights. For example, the Prior Appropriation doctrine ranks stakeholders chronologically according to the ``first in time first in right'' (also known as fully sequential allocation) and is standard for surface water rationing. Taking into account the assigned water rights $W_j$, the market maker may then apply either a Senior-first (SF) or Junior-first (JF) scheme, whereby the imbalance is allocated according to the original volume $W_j$. Thus, under Senior-first (respectively Junior-first) sorting, farmers with the largest (respectively smallest) water rights get to trade their full desired volumes first, followed sequentially by the remaining farmers, in a manner similar to a first-come first-serve allocation rule. A Junior-first prioritization may be beneficial to protect smaller farms with small $W_j$ allowing them to buy water (assuming $p< p^a$ and hence $\bfS < \bfB$) ahead of the larger farms whose demand may remain partially unfilled.

\paragraph{Uniform Rationing.}
Under the uniform rationing (UR) scheme, also known as Excess Gains, the imbalance is divided equally among market participants, up to their desired level.  Namely, if $\bfB<\bfS$, available buy volume $\bfB$  is first divided equally among the $|\sS|$ sellers, with each seller trading $(W_j-C_j^\circ(p))\wedge \bfB/|\sS|$.
Because of the latter upper bound, some farmers may reach their first-best sell volume in the first round, leaving some remaining unallocated buy volume. Hence, a  second round may be needed where any remaining buying interest is allocated equally among the  remaining sellers. These rounds are continued until all buying volume is spent. The case $\bfS<\bfB$ is treated analogously. 
UR tends to benefit buyers or sellers that have small imbalances $|W_j-C_j^\circ(p)|$ and get to fill them in the early rounds, while farmers with large imbalances are penalized.

Figure \ref{fig:OSU-all} in the Appendix illustrates the various versions of above schemes.

\paragraph{Mixed Pro-Rata (MPR).} As an intermediate scheme to SF/JF, farmers could be grouped into a few tiers, such that farmers in different tiers are treated differently according to the priority order criterion, while farmers classified in the same priority tier are treated under the axiom of equal treatment of equals, namely via proportional (pro-rata) or uniform rationing. For example, farmers could be exogenously divided (e.g. based on their land rights) into the senior Tier I and the junior Tier II. Market imbalances are first resolved within Tier I using pro-rata; any remaining trading opportunities trickle down to Tier II, allocated again on a pro-rata basis. With this scheme, senior farmers benefit from preferential access while receiving proportional treatment among their Tier I peers. Observe that  at the Pareto price, where aggregate supply equals aggregate demand, the MPR allocation coincides with the standard market outcome. Away from the Pareto benchmark, Tier I farmers may obtain higher profits at the expense of Tier II farmers. If the imbalance at price $p$ solely involves farmers from the same Tier,  the resulting allocation again coincides  with the standard pro-rata rule.

Figure \ref{fig:mixed-prorata} illustrates mixed pro-rata assuming that 
Farmers 1, 4 are in  Tier I (senior) and Farmers 2, 3 are in Tier II (junior). The resulting behavior is as follows:
\begin{enumerate}
\item For $\widetilde{p}_l =7.07 < p < 12.97$, the three buyers are Farmers 2, 3 and  4, with the latter getting strict priority and hence higher profit compared to regular pro-rata, while Farmers 2 and 3 are unable to buy (and so end up with lower profits). 
\item For $12.97 <p < p^a$, the two buyers are Farmers 2 and 3 who are in the same Tier II allocated proportionally. 
\item For $p> p^a=13.09$ the imbalance flips towards sellers. First for $p<25.84$, the sellers are Farmer 1 and Farmer 3 with Farmer 1 getting  priority to sell, while Farmer 3 is stuck with a higher-than-desired consumption. 
\item For $p \in [25.86,\widetilde{p}_u]$, $52.04=\widetilde{p}_u$ the only buyer is Farmer 4 (who buys the max she wants on this interval) and the sellers are Farmers 1, 2, and 3 who need to be rationed. Farmer 1 gets priority and sells everything while Farmers 2,3 are blocked from selling due to being junior. So the result is that only Tier 1 farmers trade with each other.
\item For $p>47.5$, Farmer 4 buys less so that Farmer 1 is getting to sell only a portion of her desired volume. 
\end{enumerate}

As can be seen in Figure \ref{fig:mixed-prorata}, the net result is that under MPR the Tier I farmers 1 and 4 get higher profits than under pro-rata, while Tier II farmers 2 and 3 get less; however for instance for $p=20$, the profits of Farmer 2 and 4 are unaffected by the tiering, and only the profits of Farmers 1, 3 shift.

\begin{figure}
    \centering
    \includegraphics[width=0.95\linewidth]{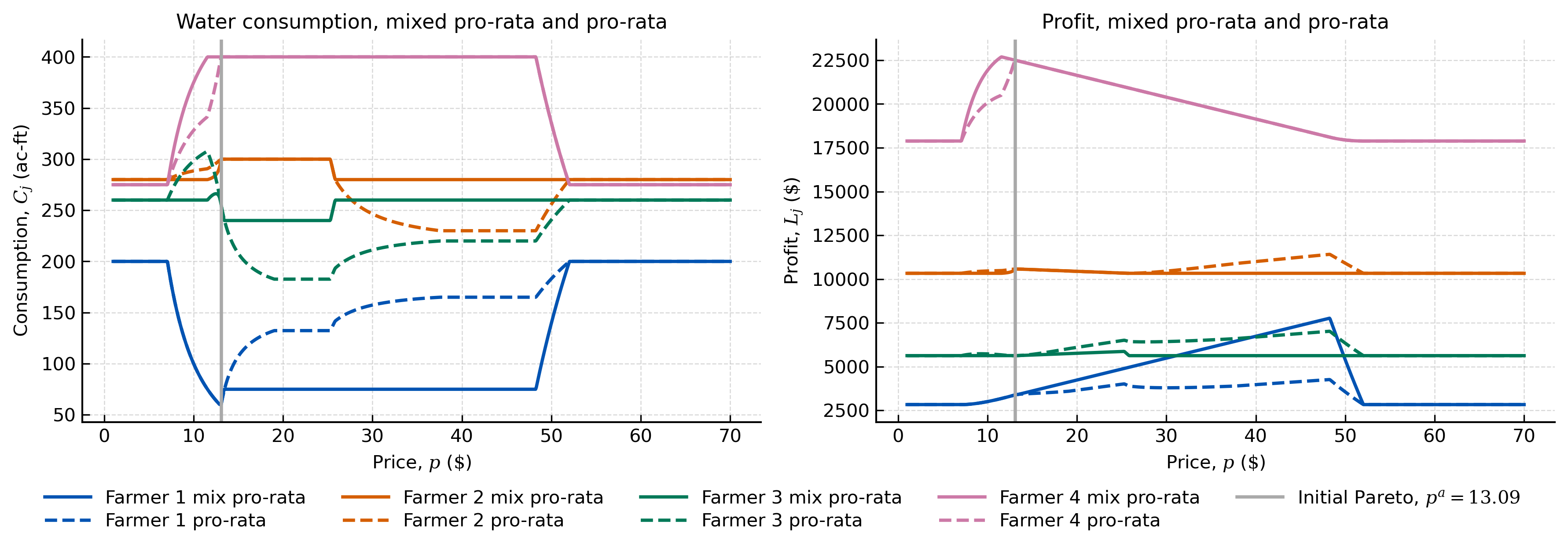}
    \caption{Mixed pro-rata allocation using market with 4 farmers and the model parameters from Section~\ref{sec:illustrative_example}. Farmers 1 and 2 are in senior Tier I that has priority over Farmers 3 and 4 (Tier II). \textit{Left:} consumption volumes of each farmer; \textit{right:} profits of each farmer. }
    \label{fig:mixed-prorata}
\end{figure}

Many other rules are possible. As one final idea, we mention a soft priority scheme where farmers are assigned weights  indicating their relative priority. Each buy/sell claim is multiplied by the weight assigned to the agent holding it, and allocation shares are calculated following pro-rata after this rescaling, while respecting consumption bounds and not going over first-best targets. Such weighted pro-rata can partially inflate the prioritization of some (say small) farmers.

\section{Concluding remarks}\label{sec:conclude}

We have introduced and analyzed a pro-rata rationing framework for groundwater markets that provides a principled response to the many practical situations---such as trading constraints, regulatory intervention, or priority trading rules---that lead to the selection of a NE different from the Pareto optimum. Building on the groundwater market model of \cite{CialencoLudkovski2025}, we have shown that the Pareto price $p^a$ simultaneously equates total supply and demand, solves the social planner's problem, and maximizes the traded volume. Any other price, while remaining a NE, creates an imbalance between bid and ask water volumes, leaving infinitely many possible allocations of the traded quantity. The proposed pro-rata framework resolves this indeterminacy by providing a unique and proportionally fair allocation of the traded water. Moreover, because every price remains a NE, it offers a natural and tractable mechanism for analyzing groundwater markets whenever they are driven away from the Pareto benchmark.

Deviations from the original Pareto price are discussed along two broad directions. \textbf{First}, we consider regulators who cap the volumes individual agents may trade, and show that restrictions on sales alone push the Pareto price up, while restrictions on purchases alone push it down. Combining both types of caps yields an ambiguous, model-dependent effect, as our numerical illustrations confirm. \textbf{Second}, we adopt a leader-follower framework in which the regulator acts as the leader and sets the price according to its own objective, while farmers act as followers and, under pro-rata settlement of market imbalances, choose their optimal trading strategies. For the regulator, motivated by practical considerations, we consider objectives that target a prescribed total traded volume (or balance social welfare against that target through a penalty parameter), or minimize the Gini index of farmers' efficiency ratios as a proxy for reducing disparities in market outcomes.

Additionally, we examined other rationing rules and how they reshape distributional outcomes. We compared the pro-rata mechanism with seniority-based (SF/JF),  uniform, and mixed allocation schemes. No single alternative to pro-rata dominates uniformly, underscoring that the choice of rationing mechanism is itself a substantive policy decision, distinct from and complementary to the choice of trading restrictions or the price-setting rule.

Several directions remain for future work. On the theoretical side, extending the framework to multi-period trading with carryover provisions, following the modeling framework of \cite{CialencoLudkovski2026}, would further enhance the practical relevance of our results. On the applied side, calibrating the model to real groundwater basins by incorporating heterogeneous crop technologies, spatially dependent trading restrictions, and empirically estimated production functions would allow regulators to quantify the trade-offs identified here in specific institutional settings. Finally, a systematic welfare comparison of the various rationing schemes, formalized through an appropriate social welfare or equity criterion, would help clarify which mechanisms are best suited to particular policy objectives.

\subsubsection*{Acknowledgments} 
IC and GDTF  acknowledge support from the National Science Foundation grant DMS-2407549. ML acknowledges support from the National Science Foundation grant DMS-2407550. 

\subsubsection*{Competing Interests} 
The authors declare no competing interests. 

\subsubsection*{Data Availability}
The developed accompanying Python code used for numerical computations is available from the corresponding author on reasonable request.


\newcommand{\etalchar}[1]{$^{#1}$}

\appendix\label{sec:apend1} 

\section{Auxiliary results } \label{sec:appendix_MM}

As mentioned in Section~\ref{sec:target_water}, the choice of scaling strength parameter $\lambda$ in \eqref{eq:MM-objective} is delicate with an interesting phase transition, see Figure~\ref{fig:price-pL-heatmap} (left).  For a fixed $A$, as the strength parameter $\lambda$ decreases to some nontrivial threshold, the optimal price $p^L$ slowly increases and then jumps to the optimal social price $p^a$; see also Figure~\ref{fig:water_sold} (right). The general message is clear: the parameter $\lambda$ must be large enough for the second term in \eqref{eq:MM-objective} to have an effect. Fundamentally, this phase transition arises from the non-convexity of the objective $\Phi^L$ in \eqref{eq:MM-objective}. As shown in Figure~\ref{fig:price-pL-heatmap} (right), as $\lambda$ decreases, the left hump decreases in value, while the right hump rises, until it eventually overtakes the left hump and becomes the global maximum.

\begin{figure}[h!]
    \centering
    \begin{subfigure}[b]{0.49\linewidth}
        \centering
        \includegraphics[width=\linewidth]{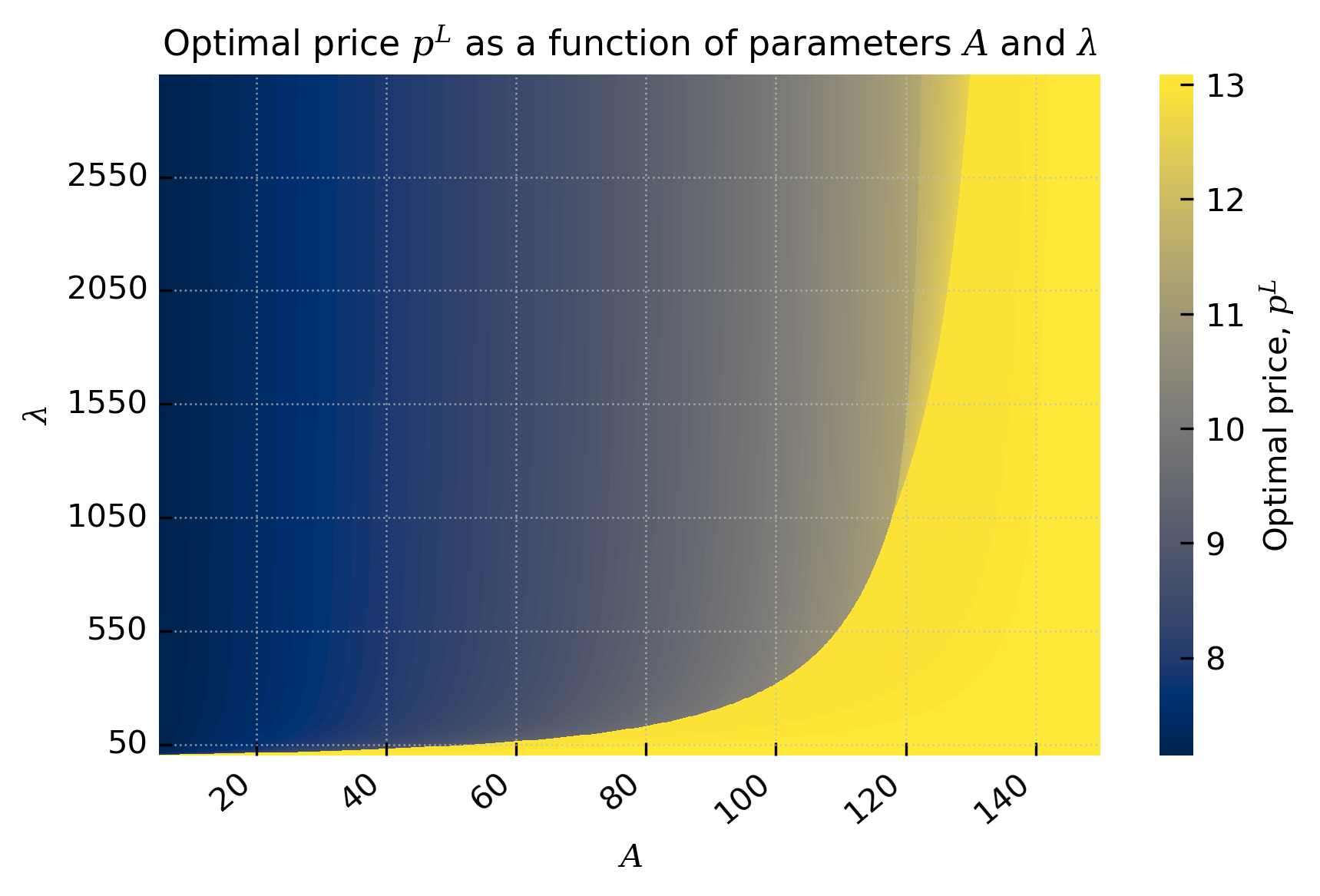}
    \end{subfigure}
    \hfill
    \begin{subfigure}[b]{0.49\linewidth}
        \centering
        \includegraphics[width=\linewidth]{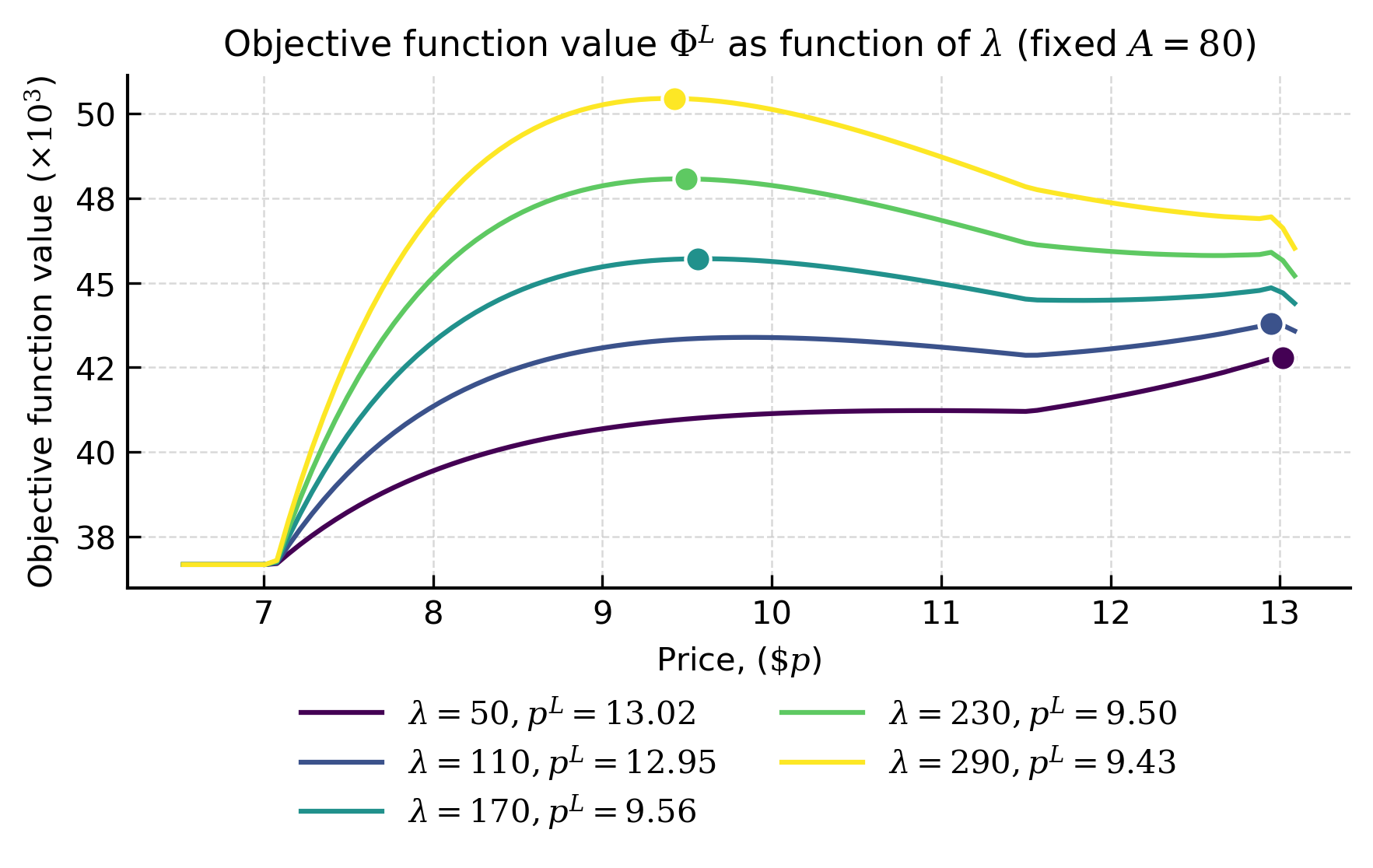}
    \end{subfigure}
\caption{Optimal price  $p^L$ as function of target  amount of water sold $A$ and strength parameter $\lambda$. }
         \label{fig:price-pL-heatmap}
\end{figure}

\section{Additional Figures}

\begin{figure}[!htbp]
    \centering
    \begin{tabular}{cc}
 \raisebox{15ex}{SF} &   \includegraphics[width=0.9\linewidth]{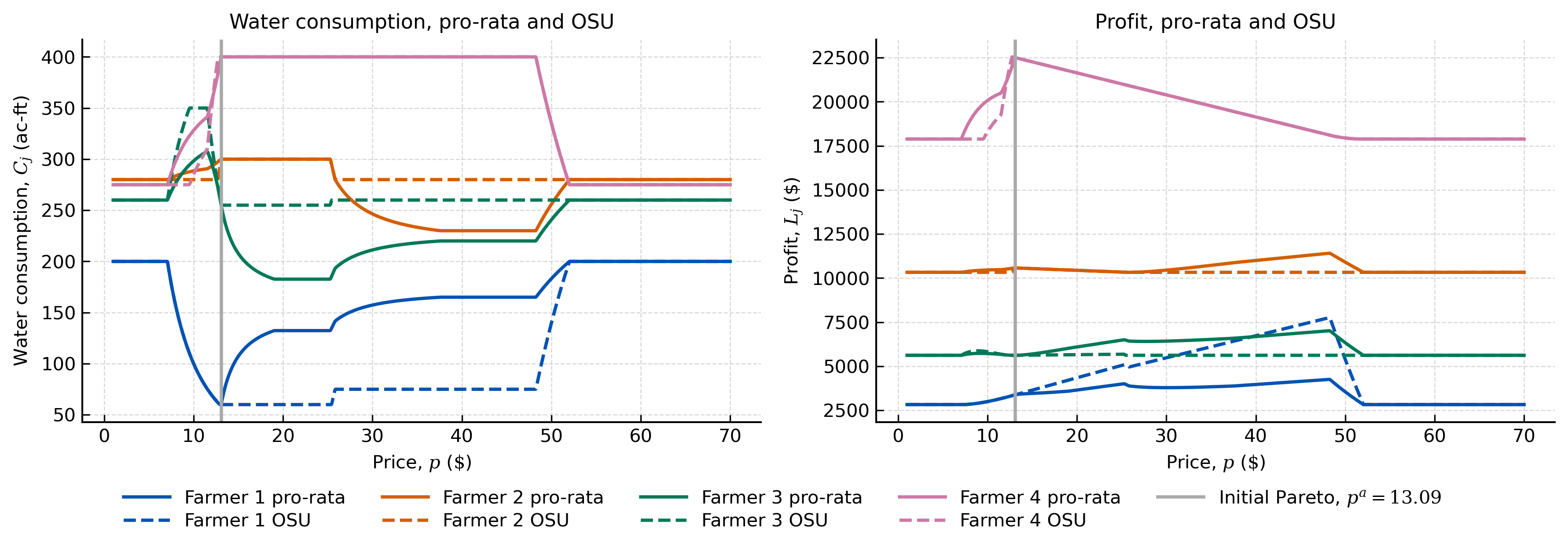} \\
 \raisebox{15ex}{JF} &   \includegraphics[width=0.9\linewidth]{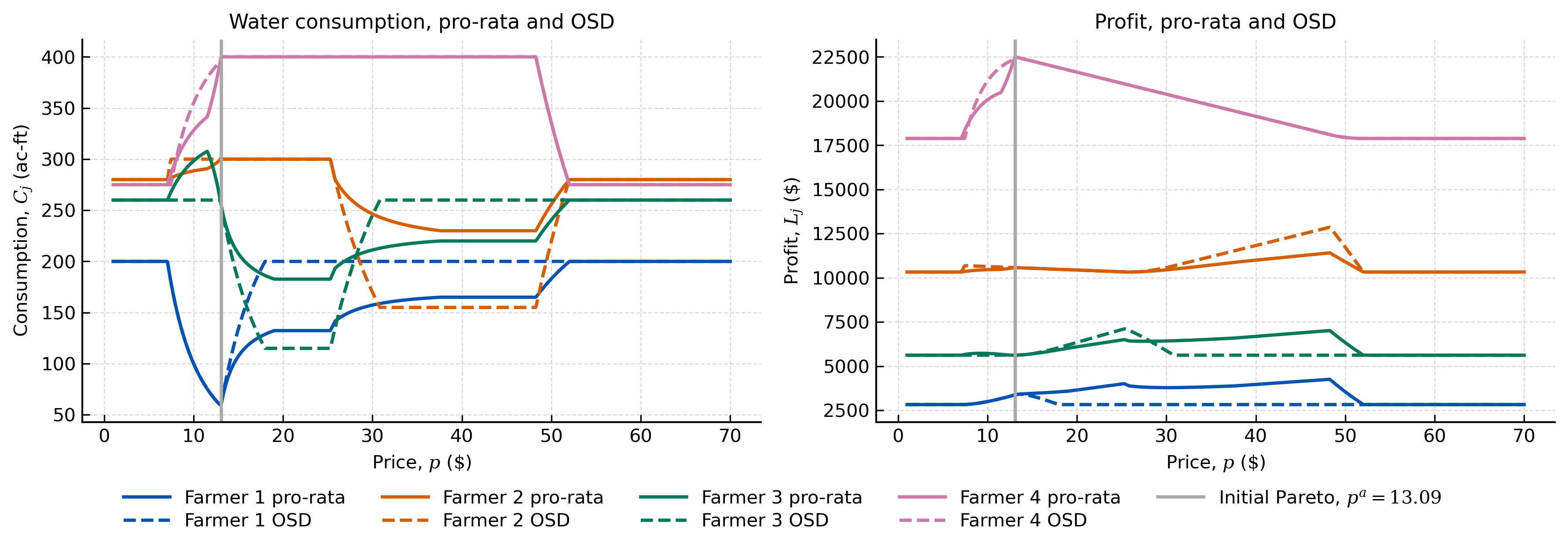} \\
  \raisebox{15ex}{UR} &       \includegraphics[width=0.9\linewidth]{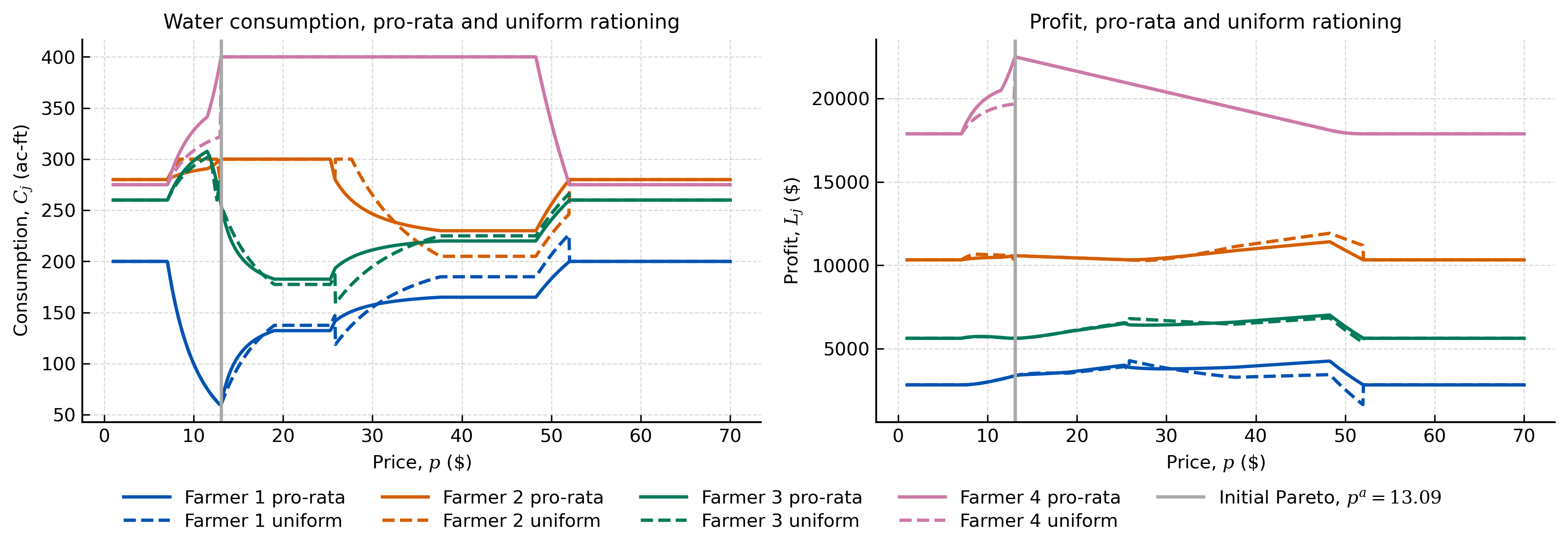}
    \end{tabular}
    \caption{Illustrating alternative allocation mechanisms: Senior-First, Junior-First, Uniform Rationing. In each row, solid lines show the base pro-rata and the dashed lines the alternative for all farmers. Left panels: consumption volumes $C_j(p)$. Right panels: profits $V_j(p)$.}
    \label{fig:OSU-all}
\end{figure}

\end{document}